\documentclass[12pt,format=acmsmall,review=false, nonacm=true]{article}
\usepackage{amsthm,amssymb}

\usepackage{amsmath}
\makeatletter
\newcommand{\leqnomode}{\tagsleft@true\let\veqno\@@leqno}
\makeatother

\usepackage{booktabs} 
\usepackage[ruled]{algorithm2e} 

\SetAlFnt{\small}
\SetAlCapFnt{\small}
\SetAlCapNameFnt{\small}
\SetAlCapHSkip{0pt}
\IncMargin{-\parindent}

\usepackage{todonotes}
\usepackage{natbib}
\setcitestyle{authoryear}
\usepackage{tikz}

\usepackage[english]{babel}


\usepackage{amsmath, mathdots, latexsym, amsbsy,mathtools}  
\usepackage{graphicx}
\usepackage{dsfont}
\usepackage{enumerate}
\usepackage{bbm}

\usepackage[letterpaper, margin=1in]{geometry}
\usepackage{hyperref}

\usepackage{cleveref}

\newtheorem{lemma}{Lemma}

\newtheorem{theorem}{Theorem}
\newtheorem{claim}{Claim}
\newtheorem{example}{Example}

\newcommand{\rev}{\ensuremath{\mathsf{rev}}}

\newcommand{\cl}{\text{cl}}
\newcommand{\conv}{\text{conv}}
\newcommand{\R}{\mathbb{R}}

\usepackage{thm-restate}

\setlength{\parindent}{0pt}
\setlength{\parskip}{1.25ex}
\let\epsilon\varepsilon

\title{The Simplicity of Optimal Dynamic Mechanisms
\footnote{This work was partially funded by the ANID Chile through Grants FB210005, AIM230004, AFB230002, and ACT210005.}
}
\ifdefined\soda
\author{}
\else
\author{Jos\'e Correa$^*$, Andr\'es Cristi$^\dag$, Laura Vargas Koch$^\ddag$}
\fi
\begin{document} 

\ifdefined\soda
\date{ }
\else
\date{%
\small
    $^*$Department of Industrial Engineering, Universidad de Chile, Santiago, Chile\\ \url{correa@uchile.cl}\\
    $^\dag$College of Management of Technology, EPFL, Lausanne, Switzerland\\ \url{andres.cristi@epfl.ch}\\
    $^\ddag$Department of Mathematics, University of Bonn, Bonn, Germany\\ \url{vargas-koch@dm.uni-bonn.de}\\[3ex]
}

\fi


\maketitle


\begin{abstract}

A fundamental economic question is that of designing revenue-maximizing mechanisms in dynamic environments. This paper considers a simple yet compelling market model to tackle this question, where forward-looking buyers arrive at the market over discrete time periods, and a monopolistic seller is endowed with a limited supply of a single good. In the case of i.i.d. and regular valuations for the buyers, \citet{BS16} characterized the optimal mechanism and proved the optimality of posted prices in the continuous-time limit. Our main result considers the limit case of a continuum of buyers, establishing that for arbitrary independent buyers' valuations, posted prices and capacity rationing can implement the optimal anonymous mechanism. Our result departs from the literature in three ways: It does not make any regularity assumptions, it considers the case of general, not necessarily i.i.d., arrivals, and finally, not only posted prices but also capacity rationing takes part in the optimal mechanism. Additionally, if  supply is unlimited, we show that the rationing effect vanishes, and the optimal mechanism can be implemented using posted prices only, à la \citet{B08}.

\end{abstract}
\newpage
\section{Introduction}

Revenue management has emerged as one of the most important business practices in the last four decades. One of its primary concerns is the appropriate use of dynamic pricing and capacity allocation to improve profitability~\cite{TV04,P21}. 
Dynamic pricing is the practice of varying the price of a product to reflect varying market conditions, while capacity allocation or rationing is the practice of deliberately understocking products to induce high-value customers to purchase at higher prices. Both these techniques are ubiquitous in essentially all industries using revenue management tools.

Consider markets where a company sells multiple (almost) identical goods over several time periods to buyers that arrive over time. The degree to which sellers use dynamic pricing and capacity allocation varies from market to market. For example, airlines use these strategies extensively. They discriminate depending on the sales channel, they create several tiers with different prices and availabilities, and prices heavily fluctuate. In retail, we also see dynamic pricing and rationing: the stock and price of a type of coat varies during the season, even in the same store. On the other side of the spectrum, we have digital goods markets. Prices of e-books and video games vary over time, but stock is rarely limited.

In principle, however, sellers could apply even more complicated schemes to increase revenue. For example, they could offer several lotteries in each period or run an optimal auction in every period.
Why do sellers use simpler mechanisms in many dynamic markets? 
Is this just because these are easy-to-implement and easy-to-understand mechanisms, or is there another reason beyond that? \citet{BS16} showed that, in fact, dynamic pricing is optimal among all possible mechanisms in the continuous-time limit of a market where buyers have i.i.d. valuations for the good and arrive according to a Poisson process. The answer of Board and Skrzypacz is then that, under certain conditions, these simpler strategies are optimal. Similarly, \citet{DMSW23} show that dynamic pricing is optimal in a market with a single (atomic) buyer. 

However, many markets, particularly large ones, fall outside the scope of these results, and we still observe simple mechanisms. This motivates our main question: In what other cases are rationing and dynamic pricing, or even just dynamic pricing actually the best options a seller has?  

\subsection{Our Results}

In this paper, we prove that in large multi-period markets, if the seller cannot discriminate based on buyers' arrival times, a simple mechanism that combines dynamic pricing and rationing maximizes the seller's revenue. And perhaps more surprisingly, we show that in the case of unlimited inventory, dynamic pricing alone is sufficient to achieve optimal revenue, even if demand varies over time.

More concretely, we consider the problem of a monopolistic seller that sells identical copies of a good over several discrete periods and before a horizon $T$. We allow incoming demand to vary over time: in each period, a new generation of buyers enters the market and it is drawn from a distribution that depends on the period. Buyers are fully rational and stay in the market until they receive a copy of the good. We assume the distributions of the buyers are publicly known, but the types are private, and that the seller commits to a mechanism from the start. We call this a \emph{multi-period market}. In the same model, \citet{B08} considered the case that the seller commits specifically to a dynamic posted-price mechanism and studied the question of how and under what conditions it is possible to compute the optimal sequence of prices.

We make two crucial assumptions. The first is that we consider the limit case of a continuum of buyers. Buyers are \emph{non-atomic} (infinitesimally small), and therefore, a deviation from the equilibrium by a single buyer cannot be perceived by the seller. The second is that the seller cannot observe the identities of the buyers and thus cannot discriminate between buyers from different generations. Specifically, we assume that the allocation and payments in any given period cannot depend on the actions of the buyers in previous periods (otherwise, the seller could use this to discriminate generations, e.g., by selling coupons to buyers that arrive early). We call mechanisms that satisfy these conditions \emph{anonymous}.

Our main result is twofold. 
First, we show that a revenue-optimal anonymous mechanism in the multi-period market has a very simple structure. For every period the seller offers two different options to buy the item. Option 1 is to get the good at a high price. Option 2 is to participate in a lottery, where the buyer gets the good only with a certain probability and pays a lower price.

\begin{theorem}
\label{thm:rationing_optimal}
    In every multi-period market, there is a mechanism achieving the optimal revenue among all anonymous mechanisms, where for every period $t \in \{1,\dots, T\}$, the seller first offers the good for some price $p_t$ and then offers a quantity $c_t$ of the good in a lottery for some price $\bar p_t < p_t$.
\end{theorem}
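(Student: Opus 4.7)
The plan is to apply the taxation principle in each period, so that any anonymous mechanism is fully encoded by a menu $\mathcal{M}_t$ of (allocation probability, payment) pairs, with induced indirect utility $U_t(v)=\max_{(q,p)\in \mathcal{M}_t\cup\{(0,0)\}}(qv-p)$. Each $U_t$ is convex, non-decreasing and $1$-Lipschitz, and by the envelope identity the period-$t$ allocation rule is $q_t(v)=U_t'(v)$ almost everywhere, while the expected payment extracted from a type $v$ equals $q_t(v)v-U_t(v)$. Forward-looking rationality of a continuum of buyers implies that a type $v$ from generation $s$ transacts in some period $\tau\in\argmax_{\tau\geq s}U_\tau(v)$, and by anonymity the collection $(U_t)_{t=1}^T$ is the sole design primitive.

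First I would rewrite the seller's revenue as $\sum_t\int[q_t(v)v-U_t(v)]\,dG_t(v)$, where $G_t$ is the \emph{endogenous} distribution of types that choose to transact in period $t$ induced by the profile $(U_\tau)_\tau$, and express the supply constraint as $\sum_t\int q_t(v)\,dG_t(v)\le C$. Dualizing the capacity constraint with a multiplier $\mu\ge 0$ turns the problem, period by period, into maximizing $\int[(v-\mu)q_t(v)-U_t(v)]\,dG_t(v)$ over convex $U_t$; after integration by parts this becomes a Myersonian objective linear in $q_t$ with coefficient $\psi_t(v)-\mu$, where $\psi_t$ is the virtual value associated with $G_t$.

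The structural heart of the argument is to show that, at the optimum, each $U_t$ can be taken to be the upper envelope of at most two affine functions together with $0$, which is exactly the two-option menu of the statement. The natural tool is an ironing/concavification step: the pointwise-optimal rule would set $q_t(v)\in\{0,1\}$ according to the sign of $\psi_t(v)-\mu$, but monotonicity of $q_t$ (equivalently convexity of $U_t$) may fail because no regularity is imposed on the primitives. The ironed solution replaces each non-monotone region by a single intermediate allocation level $\lambda_t\in(0,1)$ on an interval, producing at most two kinks in $U_t$: a slope-$1$ piece corresponding to the posted price $p_t$ and a slope-$\lambda_t$ piece corresponding to the lottery at price $\bar p_t$ with quantity $c_t$, while the $0$-piece accounts for buyers that opt out.

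The hard part, I expect, is the endogeneity of $G_t$, which depends on the whole profile $(U_1,\dots,U_T)$ through the forward-looking choice $\argmax_\tau U_\tau(v)$, together with the inter-period incentive constraints $U_t(v)\ge U_\tau(v)$ on the support of the generations actually transacting at $t$. I would handle this by first establishing a monotonicity property of the optimal $(p_t,\bar p_t,\lambda_t)$ across $t$ that makes the inter-period constraints slack at optimum, so that the per-period analysis applies verbatim; failing that, I would absorb those constraints into the Lagrangian and verify that the additional multipliers only shift $\psi_t$ pointwise and therefore leave the two-piece structure of $U_t$ intact.
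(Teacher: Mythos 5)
Your proposal has a genuine gap at its foundation: the buyer model. In this market, a buyer who participates in a lottery and fails stays in the market, so the relevant utility is the recursive one, $U_t(v)=\max_{\bar v}\{\delta_t v\, r_t(\bar v)-p_t(\bar v)+(1-r_t(\bar v))U_{t+1}(v)\}$, and the envelope identity is $\tfrac{d}{dv}U_t=\delta_t r_t+(1-r_t)\tfrac{d}{dv}U_{t+1}$ (\Cref{lem:derivative_utility}), not $U_t'=q_t$. Your reduction --- each type simply ``transacts'' in some $\tau\in\argmax_\tau U_\tau(v)$ where $U_\tau$ is the static indirect utility of the period-$\tau$ menu --- is valid only for deterministic (posted-price) menus; with lotteries the continuation value after a failed draw enters both the incentive constraints and the surviving demand, which is exactly what the recursion for $f^*_t$ tracks. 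Consequently the revenue expression $\sum_t\int[q_t(v)v-U_t(v)]\,dG_t(v)$ with an endogenous ``transaction distribution'' $G_t$, and the claim that the profile $(U_t)_t$ of static indirect utilities is the sole design primitive, do not describe the actual mechanism space (and the discount factors $\delta_t$ are dropped altogether).

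Second, the structural step is where the real work lies, and your plan defers it. You dualize capacity and run a Myersonian ironing argument period by period taking $G_t$ as given, but $G_t$ depends on all earlier allocations and buyers' incentives depend on all later ones; the paper stresses that the result cannot be obtained by iterating the static Myerson analysis, precisely because changing one period's allocation changes cash flows and incentives in all periods. The paper's route is different: it shows one may restrict to monotone allocation functions $r_t$ (\Cref{lem:r_monotone}), derives payments from the allocation via \Cref{lem:utility_formula}, and proves that revenue and inventory use are each \emph{separately} linear in each $r_t$ with the other periods held fixed; then \Cref{lem:kplus1steps} (a Caratheodory/LP-sparsity argument with the single inventory constraint, $k=1$) gives an optimal $r_t$ with at most two steps, implemented by a high price plus one rationed lower price. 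Your claim that ironing yields ``a single intermediate allocation level'' and at most two kinks is unsupported for non-regular distributions (ironing can produce many flat intervals at different levels); the two-option structure comes from the single linear capacity constraint plus sparsity, not from ironing. Finally, the existence of a multiplier $\mu$ with strong duality, and the hoped-for monotonicity making inter-period constraints slack, are asserted rather than proved --- and the latter cannot hold in general, since at the optimum cutoff types are exactly indifferent between adjacent periods.
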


Second, we show that lotteries are only necessary when there is limited inventory. In other words, with unbounded supply, as is the case with digital goods markets, dynamic posted prices are an optimal mechanism in every instance.

\begin{theorem}
\label{thm:pricing_optimal}
    In every multi-period market, if the inventory constraint is non-binding, then there is a dynamic posted price mechanism achieving the optimal revenue among all anonymous mechanisms, i.e., there is an optimal mechanism where the seller offers the good for a price $p_t$ in each period $t\in \{1,\dots, T\}$.
\end{theorem}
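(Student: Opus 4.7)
The plan is to leverage Theorem~\ref{thm:rationing_optimal} and argue that once the global inventory constraint ceases to bind, the lottery component of the optimal mechanism in each period becomes superfluous. Intuitively, the lottery exists precisely to allocate a scarce good: by offering a random allocation at a lower price $\bar p_t$, the seller separates high-value buyers (who pay $p_t$ for certainty) from lower-value buyers (who accept the lottery). When the good is no longer scarce, serving an extra lottery-participant carries no opportunity cost, so the seller may as well post the price $\bar p_t$ to anyone who wants to buy at it, and the mechanism collapses to a sequence of dynamic posted prices.

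I would make this precise via a Lagrangian argument on the revenue-maximization program underlying Theorem~\ref{thm:rationing_optimal}. Attach a multiplier $\lambda \geq 0$ to the global supply constraint; by complementary slackness, $\lambda = 0$ whenever the constraint is slack at the optimum. Re-examining the first-order conditions that produce the two-tier structure, the appearance of a distinct lottery price $\bar p_t < p_t$ should be traceable exactly to a strictly positive shadow price on capacity; at $\lambda = 0$ the lottery either degenerates ($\bar p_t = p_t$) or is set with a quantity $c_t$ that exceeds the period's demand at $\bar p_t$, in which case it is operationally indistinguishable from a posted price at $\bar p_t$. Either way the optimal contract in each period reduces to a single price, and aggregating over $t$ produces a dynamic posted-price mechanism.

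The genuine difficulty is the inter-temporal coupling induced by forward-looking buyers: the period-$t$ sub-problem is not independent of the future because buyers can strategically wait. I would handle this by backward induction on $t$. In period $T$ there is no continuation, and the revenue-optimal anonymous mechanism facing a non-atomic continuum of buyers is a single posted price by a Myerson-style argument. Assuming periods $t+1,\dots,T$ are already known to be posted-price, each buyer's continuation utility at time $t$ is a deterministic, monotone function of her valuation, so the period-$t$ subproblem reduces to a static monopoly-pricing problem against a transformed demand distribution; with no capacity shadow cost, this is solved by a single posted price $p_t$. Finally, I would verify that the induced sequence $(p_1,\dots,p_T)$ is compatible with the buyers' equilibrium waiting behavior and yields revenue no smaller than the two-tier mechanism from Theorem~\ref{thm:rationing_optimal}; this I expect to follow from incentive compatibility together with a revealed-preference comparison between the two mechanisms.
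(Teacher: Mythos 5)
There is a genuine gap, and it sits exactly where you flag the "genuine difficulty." Your backward induction does not decouple the periods: the period-$t$ mechanism affects the seller's revenue in \emph{earlier} periods through the continuation value $U_{t+1},\dots$ that enters the incentive constraints of buyers deciding whether to wait, and it affects \emph{later} periods through the leftover demand $f^*_j$, $j>t$ (see \eqref{eq:fstar}). So the period-$t$ subproblem is not "a static monopoly-pricing problem against a transformed demand distribution": choosing $r_t$ to maximize period-$t$ revenue against $f^*_t$ ignores that a different $r_t$ changes what can be extracted both before and after $t$. Moreover, the induction hypothesis "periods $t+1,\dots,T$ are already known to be posted-price" only lets you conclude that posted prices are optimal \emph{within} the class of mechanisms that use posted prices in the future; it does not rule out that the globally optimal anonymous mechanism uses a richer menu in some later period precisely to manipulate the option value of waiting. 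Your closing "revealed-preference comparison" with the two-tier mechanism of \Cref{thm:rationing_optimal} is exactly the missing step, not a routine verification. The Lagrangian part has a similar problem: the seller's program is only multilinear (linear in each $r_t$ separately, not jointly), so it is not a convex program and complementary slackness/KKT reasoning cannot be invoked off the shelf; and even granting $\lambda=0$, the claim that the lottery tier "degenerates" is asserted rather than derived.

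The paper's route avoids both issues. After reducing to monotone allocation functions (\Cref{lem:r_monotone}) and deriving the explicit payment formula (\Cref{lem:utility_formula}), one shows that the \emph{total} revenue over all periods — including the cross-period effects through $U_{t+1}$ and $f^*_t$ — is linear in each single $r_t$ when the other allocation functions are held fixed. Fixing all $r_j$, $j\neq t$, and re-optimizing over $r_t$ is then a linear problem over monotone functions; with the inventory constraint non-binding one applies \Cref{lem:kplus1steps} with $k=0$ and gets a single-step optimal $r_t$, which is implemented by the posted price $p_t=\delta_t q_t - U_{t+1}(q_t)$ (checking indifference at the jump point $q_t$ and strict preferences elsewhere using $\tfrac{d}{dv}U_{t+1}\le\delta_{t+1}\le\delta_t$). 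If you want to salvage your outline, the key lemma you need to prove is precisely this separate linearity of the full objective in each $r_t$; without it, neither the per-period re-optimization nor the "lottery degenerates when capacity is slack" claim goes through.
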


These results depart from standard literature in the area in three key aspects. The first is that we do not restrict buyers' valuations to be identically distributed as \citet{BS16}. The second is that we do not make any sort of regularity assumptions on the distributions of valuations. In fact, we do not even require the distributions to be absolutely continuous. The third aspect is that we allow for very general discounting: the seller can have a varying discount rate on cash flows, and the buyers can have a discount factor on their valuations to model impatience and a possibly different discount rate on the payments.

An interesting implication of our results combined with the work of \citet{B08}, is that the optimal dynamic mechanism can actually be computed. \citet{B08} provides an algorithm to compute optimal dynamic prices under a mild convexity assumption on the revenue curve. We show that dynamic prices are indeed the optimal mechanism when we have unlimited inventory, so Board's algorithm finds the optimal anonymous mechanism when both conditions are satisfied.

It is easy to check that our results do not hold if we remove any of our two crucial assumptions. 
If we allow for \emph{non-anonymous} mechanisms, the optimal mechanism loses its dynamic aspect, since buyers are perfectly discriminated according to their arrival time. 
In the case of unlimited supply, the optimal mechanism for the seller is to treat each generation as a separate problem and sell at the monopoly price for each generation. More generally, in case of limited supply, the optimal non-anonymous mechanism takes the following form: the seller guesses (pre-computes) how many units to sell to each generation and then, in each period she runs an optimal static auction considering only those buyers that arrived in that period and supply equal to the pre-computed quantity.
On the other hand, if we require the mechanism to be anonymous but we have \emph{atomic} buyers, the mechanism can anonymously punish deviations. We illustrate the latter with the following example. 

\begin{example}
    Consider a market with two periods and infinite supply. In the first period, $1$ unit of non-atomic buyers arrives, all with a valuation of $1$. In the second period, $1$ unit of non-atomic buyers arrives, all with a valuation of $1/2$. A non-anonymous mechanism can set the price for first-generation buyers to be $1$ in both periods and the price for second-generation buyers to be $1/2$, extracting a total revenue of $3/2$. In contrast, no dynamic posted price mechanism can get a revenue higher than $1$. On the other hand, if we require the mechanism to be anonymous, but we have discrete buyers, we can consider an almost identical instance. Take the same values, but instead of a unit of buyers in each generation, assume each generation is composed of exactly one buyer. In the second period, the seller can ask what are the values of the two buyers present in the market. If the answers are $1/2$ and $1$, then both get the item at prices $1/2$ and $1$, respectively. If any of the two buyers deviates, no one receives the item. Again, the revenue is $3/2$, whereas with posted prices, the maximum possible revenue is $1$.
\end{example}

The same instance can be used to show that with unlimited supply, even if we further restrict the mechanism to use only prices and rationing, \Cref{thm:pricing_optimal} does not hold if we have discrete buyers. 
However, in case of a single buyer, \citet{DMSW23} showed that dynamic pricing is optimal among the class of non-adaptive lottery mechanisms (i.e., the seller decides on a set of lotteries at time 0 and cannot adapt the offered lotteries based on observed participation). The following example shows that we cannot hope for the same type of result for two or more buyers, which justifies our restriction to the non-atomic setting.

\begin{example}
\label{ex:rationing_usefull_discrete}
  Consider two buyers. The first buyer arrives at time one with valuation $v_1=1$, and the second buyer arrives at time 2 with valuation $v_2=1/2$. If the seller only uses posted prices, the maximum possible revenue is again $1$. Now, suppose the seller can additionally use capacity rationing, and consider the following strategy $(p_1,c_1)=(3/4,\infty)$ and $(p_2,c_2)=(1/2,1)$, where $c_i$ denotes the number of copies of the good that will be offered in period $i$ for a price of $p_i$. In this case, buyer two will try to purchase the item at a price of $1/2$. Given this behavior, buyer one must decide either to purchase upon arrival for a utility of $1/4$ or to wait until time $2$ and try his chances at the lottery. In this latter case, his expected utility is $(1/2)\cdot (1/2)=1/4$, (note that in this case the demand is $2$ while the available capacity is $c_2=1$). Thus, buyer one purchases in the first period. Therefore, the seller's revenue is $5/4$ and thus strictly larger than in the case of posted prices.
\end{example}

Interestingly, we can observe how \Cref{thm:pricing_optimal} emerges in this example, as the market grows and the relative size of each buyer tends to $0$. Imagine instead of a single buyer per period, we have $M$ identical buyers per period of size $1/M$. If we try to mimic the strategy from \Cref{ex:rationing_usefull_discrete}, we get that we should set $(p_1,c_1)=(1/2+\frac{1}{2(M+1)},\infty)$ and $(p_2,c_2)=(1/2,1)$, as a single buyer that deviates from period $1$ to period $2$ gets a utility of $(1/2)\cdot (M/(M+1))$. Therefore, the resulting revenue is $1+\frac{1}{2(M+1)}$, which converges to $1$, the optimal revenue for posted prices, when $M$ goes to infinity.

This difference between small and large markets could help to explain why in larger markets, like retail and digital goods markets, mechanisms are much simpler than in smaller markets like plane tickets. Since flights are not substitutes for each other, each flight can be thought of as a separate small market, where the assumption that buyers are infinitesimally small is unrealistic.

The observation that with unlimited supply the extra revenue achieved by adding capacity rationing on top of dynamic prices vanishes as the market grows large is not new. Indeed, \citet{LR08} proved this in a two-period setting, when buyers' valuations for the item take a particular shape (they are uniformly distributed). Furthermore, \citet{ZC08} extended the result of Liu and van Ryzin to allow for arbitrary distributions. We significantly extend these results by showing that dynamic pricing is optimal beyond the two-period case and among a much richer class of mechanisms. 

It is worth noting that in the multi-period case, even if we restrict our attention to rationing plus pricing, the analysis of both the equilibrium and the optimization problem is significantly more involved than in the two-period case.
In two periods, a buyer faces essentially a single decision: whether to buy in the first period. It is possible, then, to pin down the behavior of the buyers with an explicit formula. In the case of more than two periods, both the seller and the buyers face dynamic optimization problems, and the interactions between buyers become non-trivial, which means a significantly different machinery is required for the analysis.

\begin{figure}
    \centering
    \includegraphics[width=0.24\textwidth]{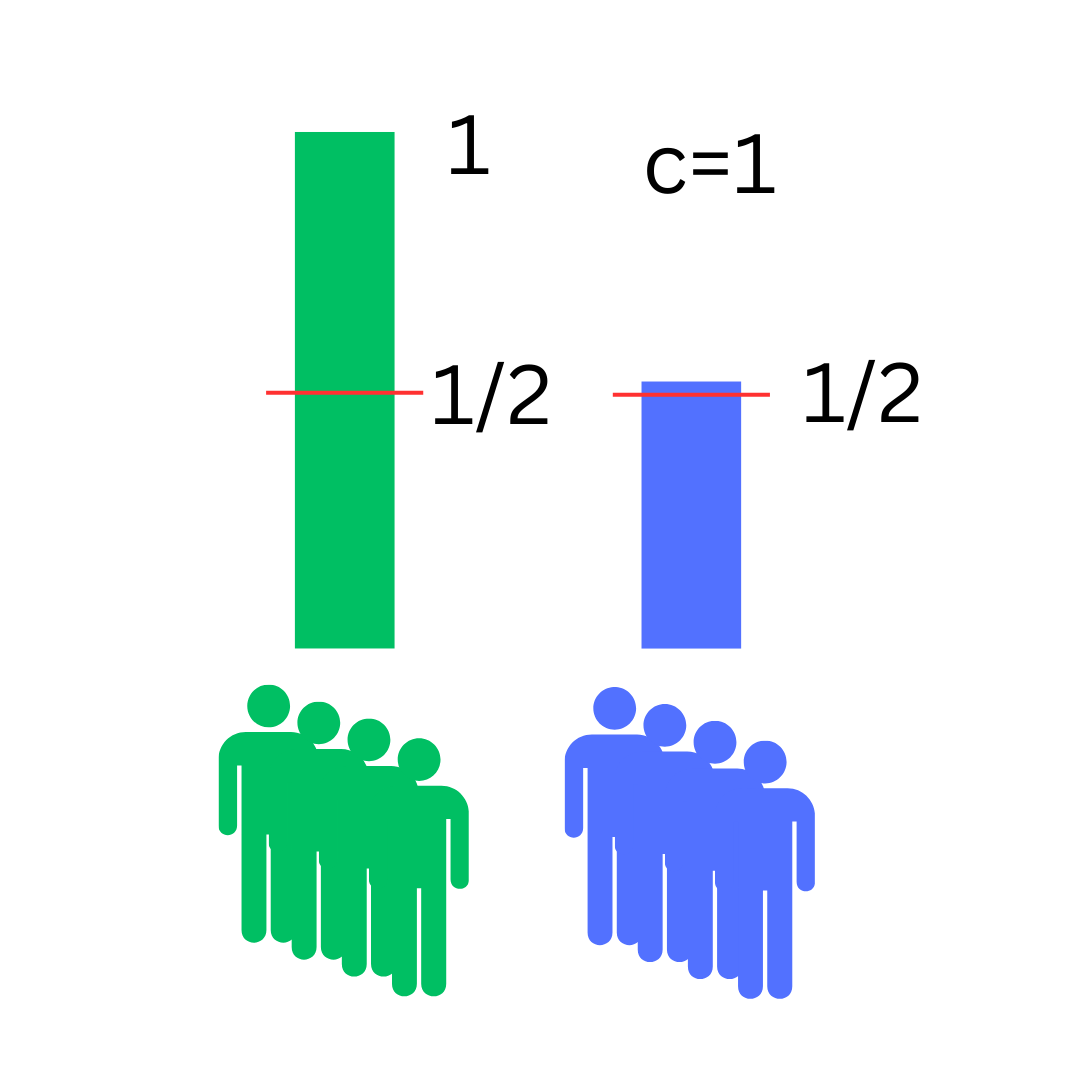}
        \includegraphics[width=0.24\textwidth]{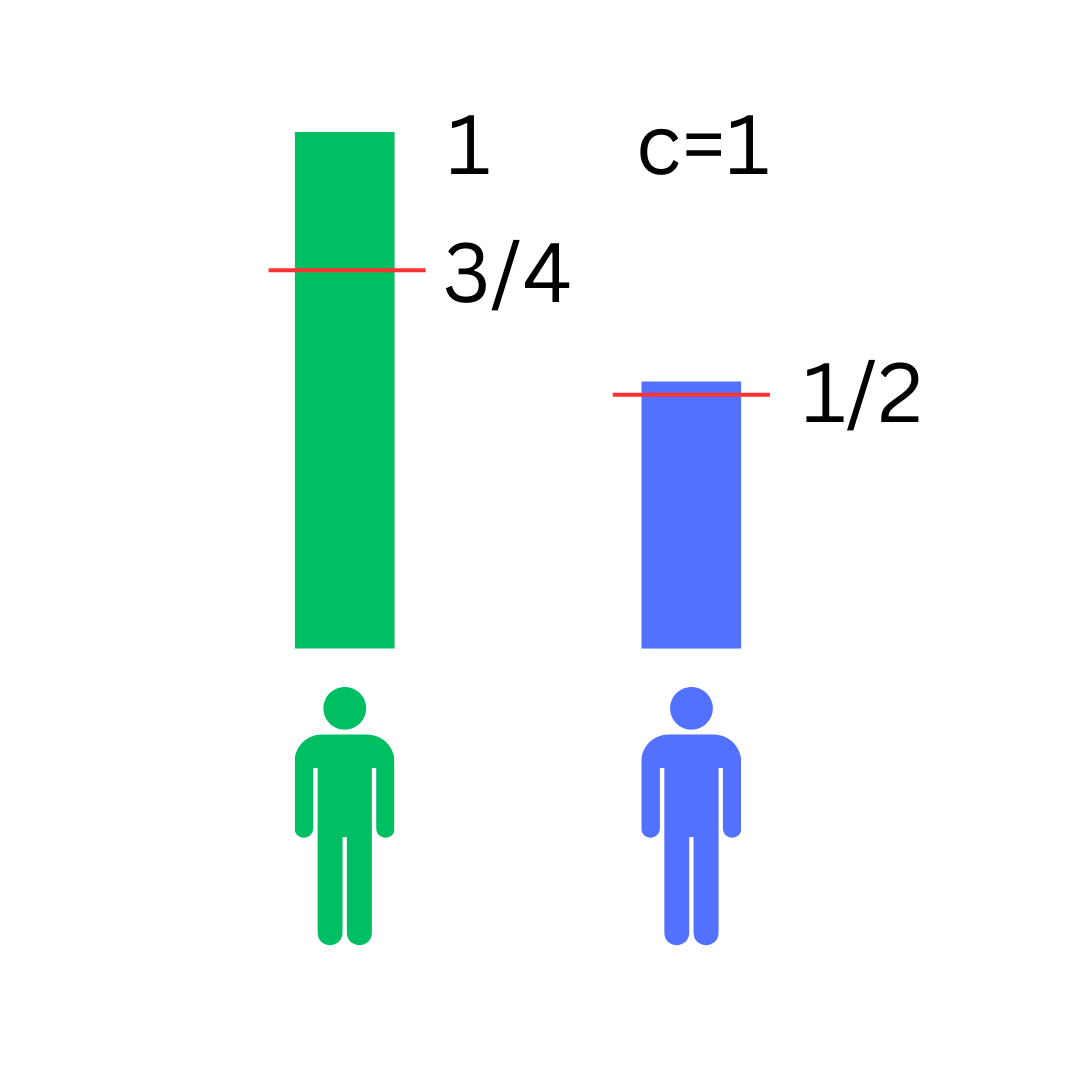}
            \includegraphics[width=0.24\textwidth]{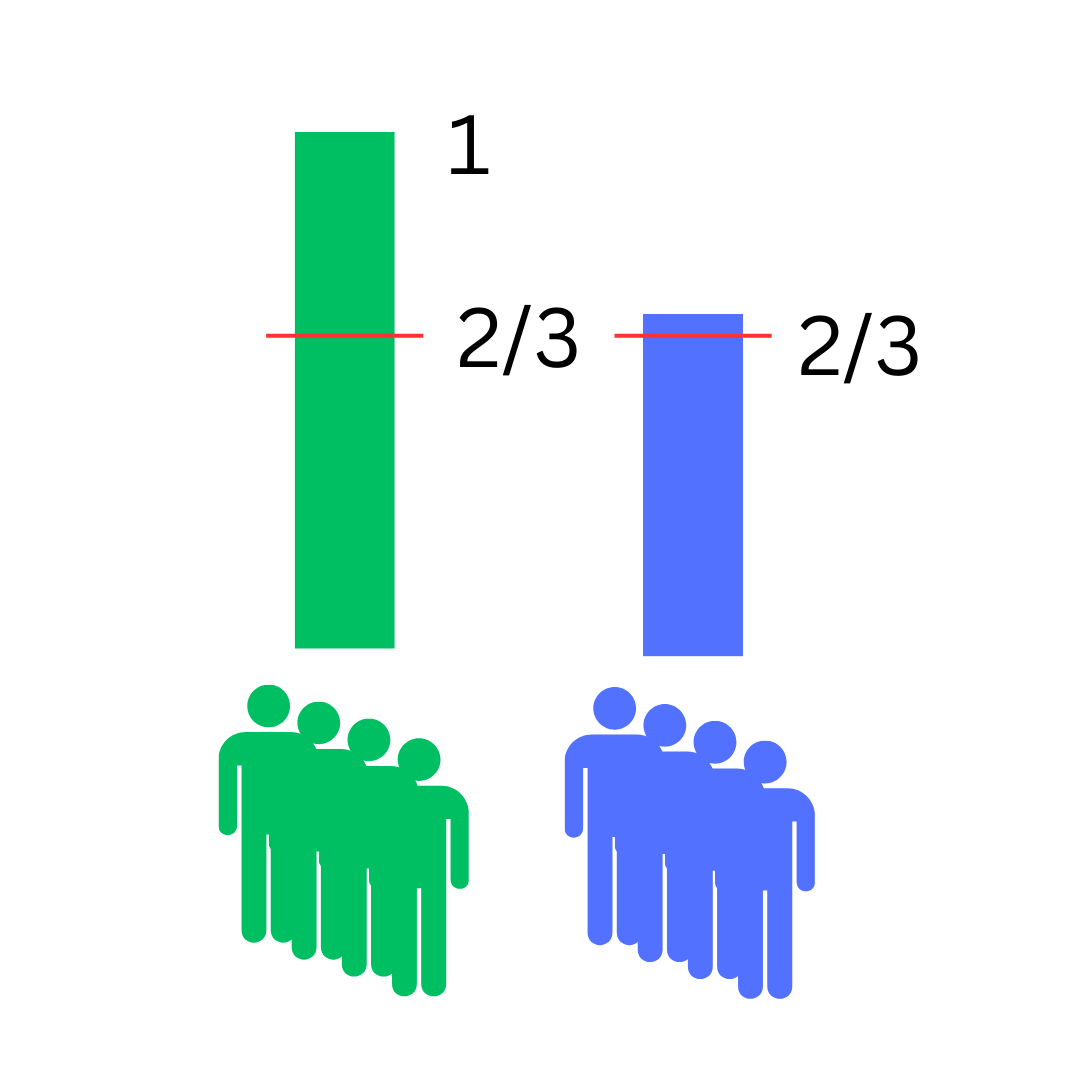}
                \includegraphics[width=0.24\textwidth]{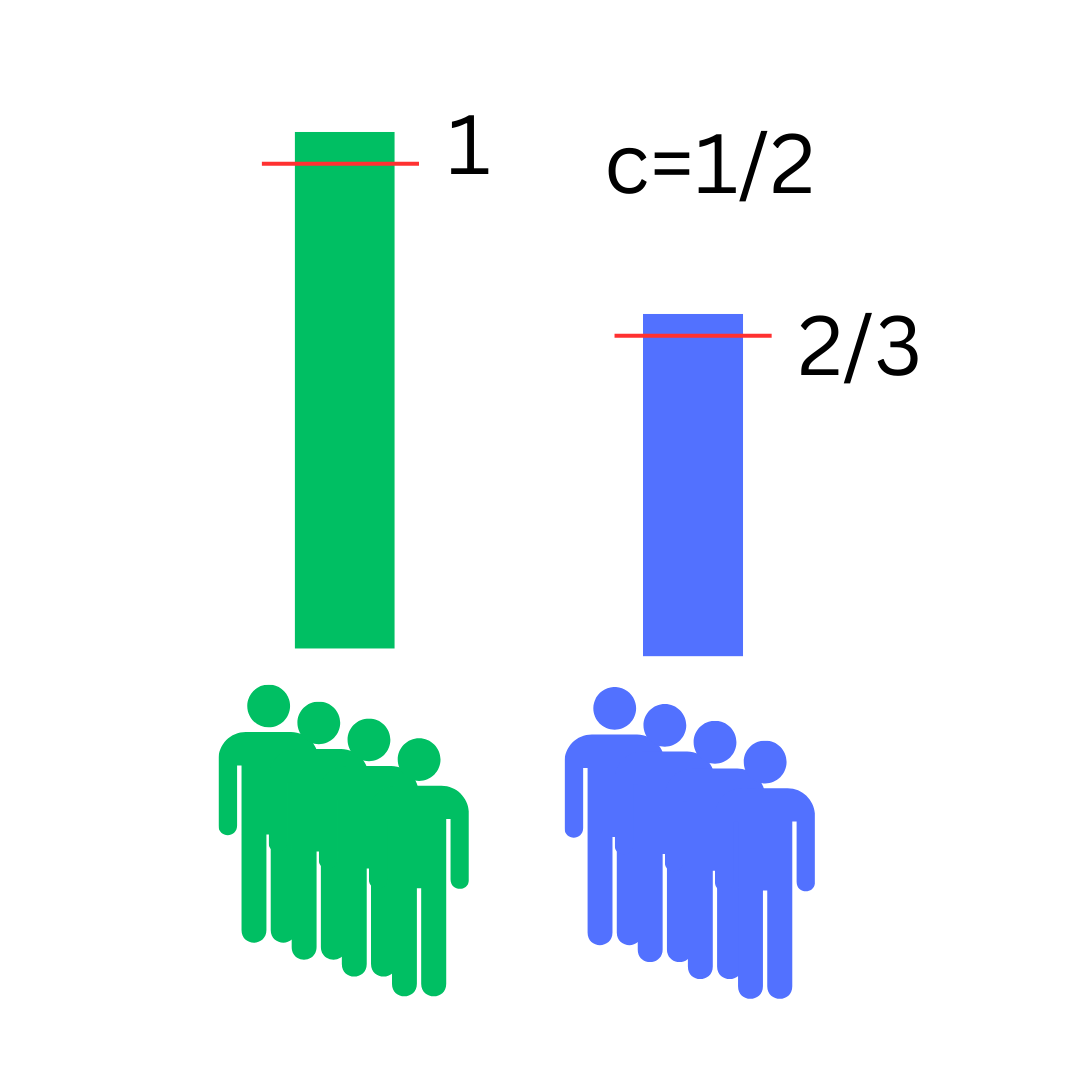}
    \caption{The pictures illustrate the examples where we describe when rationing is useful from a seller's perspective. 
    In the two pictures on the left we compare the situation in a market with discrete buyers of mass one to a market with many non-atomic buyers of the same mass. While in the large market in picture 1 rationing is of no help, this changes with atomic buyers as in picture 2, see also \Cref{ex:rationing_usefull_discrete}.
    A similar situation can be observed when considering a large market with finite inventory, here of $\frac{3}{2}$. While without rationing the prices marked in red in picture 3 are optimal, with rationing the seller can do better as depicted in picture 4, see also \Cref{ex:rationing_usefull_inventory}. 
    }
    \label{fig:enter-label}
\end{figure}


In our last example, we show that with bounded supply, lotteries are indeed necessary to achieve the optimal revenue, so there is a strict separation between the cases covered by \Cref{thm:pricing_optimal,thm:rationing_optimal}. An example in the same spirit was already presented by \citet{loertscher2020monopoly}, who studied under what conditions a seller might want to use rationing, in a single-period setting.
\begin{example}
\label{ex:rationing_usefull_inventory}
Both at time one and time two, buyers of mass one arrive. The buyers arriving at time one have a value of one while the buyers arriving at time two have a value of $\tfrac{2}{3}$. The seller has an inventory of $\tfrac{3}{2}$. 
With only prices the best the seller can do is to either offer the good for a price of one and sell one unit or to offer the good for $\tfrac{2}{3}$ and sell all $\tfrac{3}{2}$ units. In both cases, the total revenue is one. 
With rationing on the other hand, the seller can offer one unit of the good in period one for $\tfrac{5}{6}$ and half a unit of the good in period two for $\tfrac{2}{3}$. 
In this case, for a high-valued buyer, deviating from period one to period two is not advantageous since a buyer gets the good in period two only with a 50\% chance (there is only half a unit available, but buyers of mass one aim to buy). 
Thus, when using a lottery the revenue of the seller is $\tfrac{7}{6}$, which is strictly larger than in the setting with only prices, where the revenue was one.
\end{example}

The example shows that posted prices are not enough to achieve the optimal revenue. Thus, mechanisms that use dynamic pricing combined with rationing are, in some sense, the simplest mechanisms for achieving optimal revenue.

Due to \cite{M81}, it is well known that the optimal static mechanism can be described as the maximizer of the concavified revenue curve. In the case of unlimited supply, the revenue curve and its concavification coincide in the unrestricted maximizer, so the optimal mechanism takes the form of a fixed price. 
On the other hand, in the case of limited supply, the maximizer of the constrained problem might correspond to a point where the two curves do not coincide, so the concavification corresponds to a convex combination of two points in the revenue curve. Therefore, the optimal mechanism can be implemented by a high price and a low price with rationing.

\subsection{Technical Overview}

To prove our results, we first observe that for every anonymous mechanism, the outcome can be expressed by functions $r_t$ and $p_t$ for $t \in [T]$, where $r_t(v)$ expresses the probability that a buyer with valuation $v$ present at time $t$ obtains an item in this period, and $p_t(v)$ is the expected payment in this period. 
In general, one could think of anonymous mechanisms that additionally depend on publicly observable information and cannot be described by such functions. For example, if the mechanism offers two lotteries in the first period, the allocation in the second period could depend on the demand for each lottery in the first period. However, this has no real effect in the allocation in our non-atomic setting, as unilateral deviations (infinitesimal) cannot be detected from publicly observable events, such as the mass of served buyers. So, for any mechanism that uses public information from previous periods, there is an equivalent mechanism (that results in the same outcome) that does not use any information from previous periods.


By applying the direct-revelation principle, we can assume that in the mechanism, each buyer reports their true valuation in each period, and therefore, the mechanism is fully described by the functions $r_t$ and $p_t$. We show that in an incentive-compatible mechanism, the expected utility of a buyer with a given value and present at a given period can be expressed by a formula that depends only on the allocation functions $r_t$ (see \Cref{lem:utility_formula}). We do this by recursively applying the envelope theorem to write the utility function in a period as the integral of a formula that depends on the allocation function $r_t$ and the derivative of the next period's utility. Since we allow for general discounting,\footnote{The discount factor can vary over time and can be different for the seller and for the buyers, so the seller could make a profit by exploiting this difference.} to obtain this integral formula in our dynamic environment, we need two extra constraints on the mechanism: that all payments are nonnegative, and that a buyer pays only in the period where they receive the item. This ensures that the utility at value zero is zero and that we can compute the utility via a recursive formula.

We then prove that we can restrict our attention to monotone allocation functions (see \Cref{lem:r_monotone}). Our analysis is inspired by Myerson's analysis of the static case. However, we face significant challenges in the dynamic setting. We first have to untangle the dependencies of the utility functions on the allocations (notice that the utility in a period depends on the allocation functions of every period in the future) and then exploit the fact that not only each utility function is convex but also the difference between consecutive utility functions is convex. When the discount factor (on the value of the buyer) is strictly decreasing, this argument is enough to prove the monotonicity. However, if the discount factor remains constant for some pair of consecutive periods, there can be mechanisms with non-monotone allocations. We fix this issue by showing that, in that case, we can modify the mechanism so that the revenue only increases and the resulting allocation functions are monotone (see \Cref{subsec:weakly_decreasing}).

Finally, we show that the revenue and the total mass of sold items are fully determined by the allocation functions $r_t$ and, moreover, are linear on each allocation function $r_t$. This is a non-trivial fact, as in a given period $t$, the demand depends on the allocation on previous periods, and the behavior of the buyers depends on the allocation on future periods (they take their expected future utility into account). It turns out that the expected payment of a buyer in period $t$ can be expressed as the integral of products of only future allocation functions where each appears only once, and the probability that a buyer is present in period $t$ can be expressed as products of past allocation functions, where each appears only once. This results in the cash flow being linear in all allocation functions in a given period.
Note that, since we have products of allocation functions, the revenue and the available inventory are not linear on the vector of allocation functions $(r_1,\dots,r_T)$ as a whole. Despite that, the linearity on each is enough to prove our two theorems. If we take an optimal sequence of allocation functions, we can fix all but $r_t$ and re-optimize in terms of $r_t$. This way, we obtain a linear optimization problem on $r_t$, and a corollary of Caratheodory's theorem (see \Cref{lem:kplus1steps}) implies that the optimal $r_t$ is a piece-wise constant function with a single jump if the inventory constraint is non-binding, or with two jumps if it is binding. A simple construction shows that these allocation functions can be implemented with a single price per period or with a fixed price and one lottery per period, respectively, which concludes our proof. 

It is important to notice that our results cannot be derived and should not be understood as an iterative or a recursive application of Myerson's analysis for the static case. A modification of a mechanism in a single period results not only in a change in the cash flow in the given period but might affect all periods at once. The incentives for buyers in past periods will change, and also the demand in future periods will change. What we prove is that, even though cash flows in all periods can change by changing the allocation function in a single period, \emph{all} these changes are linear.

\subsection{Further Related Literature}

Designing and analyzing revenue maximizing mechanisms is a central economic problem since at least the seminal work of 
\citet{M81}. More recently, motivated by evolving platforms, there has been a lot of effort in understanding dynamic market design and dynamic mechanism design.
Thus, there is a broad literature on the topic, leading to several different models. One important characteristic is the type of the buyers. In our setting, buyers are long-lived and fully strategic as in the works of \citet{BS16,GMS17} and \citet{PST14}. This means buyers enter the market once, then wait until the optimal moment to buy, and only leave after that.
In contrast to this, the work of \citet{M16,PV13} and \citet{PS03} assumes buyers are endowed with a release date and a deadline, an entrance time into the market, and a corresponding exit time. This means either the buyers get the item by this deadline or they leave the market.
\citet{BM09} assume buyers stay in the market until they decide to purchase. Then, no matter whether they successfully obtained a good or not, they leave the market. In other models, as that of \citet{ZC08}, a combination of long-living and myopic buyers is considered. Moreover, \citet{loertscher2020monopoly}, consider a model where buyers can resell the purchased item and \citet{LMT22} consider a model where buyers as well as sellers enter and leave the market. 

Also, the knowledge the seller has over the buyers differs between models. In our setting, the seller knows the number of periods and the distribution functions according to which buyers enter into the market, but the valuations are private information. We assume that buyers bid a valuation in each time period where they are still present. 
In other settings, as that of \citet{GMS17} the seller has only partial information and learns about the distributions of the buyers by observation. 

Beside aiming for a general revenue optimal mechanism, also the analysis of optimal pricing and optimal rationing mechanisms has been studied extensively in the Theoretical Computer Science, Operations Research and Economics communities (see e.g. \citet*{BM09,B08,CDW16,CHMS10,DMSW23,loertscher2020monopoly,nisan2007algorithmic,P21,TV04}). For our setting, it is especially relevant that \citet{B08} provides an algorithm that computes revenue optimal dynamic prices under mild convexity assumptions. While pricing mechanisms can be optimal already in some settings, in several model variants, it can be shown that capacity rationing may be helpful. These include: in the single period setting \cite{loertscher2020monopoly}, with demand uncertainty \cite{NP07}, without commitment of the seller to the announced strategy \cite{DG99}, and for disappointment-averse players \cite{LS13}.

\section{Model}

We consider a \emph{multi-period market}. The market is composed by a seller and a continuum of buyers. 
The seller is profit maximizing and offers identical copies of a good with supply or \emph{inventory} $I$ in the market. The marginal cost of producing the good is normalized to zero. Buyers arrive over time and are long-living, which means they stay in the market until they get a copy of the good. 

The market operates during $T$ discrete time periods. At each period $t \in [T]=\{1,\ldots,T\}$ a new generation of infinitesimal buyers arrives in the market. The buyers that arrive at a period $t$  have private nonnegative valuations for the good given by a distribution function $F_t$ with bounded support, where $F_t(v)$ corresponds to the mass of buyers that arrive in period $t$ with value at most $v$. By normalizing the maximum possible valuation to 1, we assume that $F_t \colon [0,1] \mapsto [0,1]$.
Without loss of generality, we assume that in every period, a total buyer’s mass of 1 enters the market.\footnote{If there is a period $t$ where a mass $m_t<1$ enters the markets we can simply set $F_t(0)=1-m_t$ and scale the valuation distribution accordingly, since buyers with valuation 0 are irrelevant.} We call these functions the \emph{distributions of buyers}. Without loss of generality, we assume there is a finite measure $\mu$ of the interval $[0,1]$ and functions $f_1,\dots,f_T$ such that for all $t \in [T]$, $v \in[0,1]$
\begin{align*}
    F_t(v)= \int_0^v f_t(u)\, d\mu(u).
\end{align*}

Thus, a multi-period market consists of $T$ periods, distributions of the buyers $(F_t)_{t \in T}$, and a monopolistic seller with inventory $I$.
The seller can choose any mechanism, but she has only limited knowledge. The seller knows the distribution functions, but the valuation of a specific buyer is private information. That is, the seller cannot discriminate across types. Moreover, we assume the environment is anonymous, in the sense that the seller can neither observe when a buyer entered the market nor a buyer's actions in previous periods. We call a mechanism satisfying this an \emph{anonymous mechanism}.

We aim at finding a revenue-optimal mechanism.\footnote{As standard in the literature we look for revenue optimal mechanisms satisfying voluntary participation.} By the revelation principle, we can assume that this mechanism is incentive-compatible, i.e., the optimal strategy for a buyer present in period $t$ is to bid his true valuation. 
Moreover, in an anonymous mechanism the payment and allocation of a buyer present in period $t$ only depend on the buyer's actions in period $t$. Thus, an \emph{anonymous mechanism} is defined by an allocation function $r_t:[0,1] \mapsto [0,1]$, and a pricing function $p_t:[0,1] \mapsto [0,1]$ for each period $t \in [T]$. The function $r_t(v)$ describes the probability that a buyer who bids value $v$ gets served at time $t$ and the function $p_t(v)$ describes his expected payment in period $t$.\footnote{More generally, an anonymous mechanism could depend on the history of publicly available information. Consider $H_t$ the publicly available history at time $t$, which consists of the sales and payments in previous periods. A mechanism that uses $H_t$ can be simulated by one that does not and obtains the same revenue. Consider a mechanism $M$ described by functions $r_t(v,H_t)$ and $p_t(v,H_t)$. Fix the history in the equilibrium path (in which we may assume buyers report truthfully), and denote it by $\bar{H}_t$. We can define a mechanism $M'$ given by $r'_t(v)=r_t(v,\bar{H}_t)$ and $p'_t(v) = p_t(v,\bar{H}_t)$. Reporting truthfully must also be an equilibrium in $M'$, as a deviation by a single (infinitesimally small) agent does not affect the history of publicly available information $H_t$.}

\paragraph{The buyers.} Buyers are long-lived, forward-looking, and fully rational. Upon entering the market, they observe all future expected prices and assignment probabilities and anticipate the decisions of other buyers. Consequently, buyers choose a strategy optimizing their own expected utility. Buyers may be impatient and discount their valuation for the item, depending on when they receive it. For a decreasing sequence of (valuation) discount factors $\delta_1 \geq \ldots \geq \delta_T$  with $\delta_t \in (0,1]$ for all $t \in [T]$, 
a buyer of type $v$ has a value of $\delta_t\cdot v$ for receiving the item in period $t$.\footnote{Note that this captures any non-decreasing discount function as e.g. an exponential discounting function.} The utility of a buyer is his valuation for the good at the time he receives it minus the paid price.\footnote{In \Cref{sec:discount} we show that our results remain valid even when both the seller and the buyers also discount the monetary transfers.}
The expected utility of a buyer of type $v$, conditional on being present at time $t$, is given by the recursive formula
\begin{align}
\tag{U}
\begin{split}
 \label{eq:utility}   U_t(v) &= \max_{\bar{v} \in [0,1]} \left\{
\delta_t \cdot v \cdot r_t(\bar{v} ) - p_t(\bar{v}) + (1 -r_t(\bar{v})) \cdot U_{t+1}(v)
\right\} \\
U_{T+1}(v) &= 0.    
\end{split}
\end{align}
The assumption that the mechanism is incentive-compatible is equivalent to the assumption that the maximum in the previous formula is attained at $\bar{v}=v$ for all $v\in [0,1]$ and all $t \in [T]$. 
We will assume \emph{voluntary participation}, which is $U_t(v)\geq 0$. This is equivalent to $p_t(0) \leq 0$ for all $t \in [T]$, and by standard normalization arguments, we can assume $p_t(0)=0$ for all $t \in [T]$. This means that whenever we speak about a mechanism we assume it satisfies voluntary participation and whenever we give an explicit construction for a mechanism and prices, we will make sure that $p_t(0)=0$ for all $t \in [T]$.

\paragraph{The seller.} The seller chooses a mechanism with the goal of maximizing her revenue, i.e. the total prices actually paid in the game. 
Thus, for an incentive-compatible mechanism $M$, the revenue can be written as
\begin{equation}
\label{eq:revenue_general}
 \rev(M)=\sum_{t=1}^T \int_0^1 p_t(v) \cdot f^*_t(v)\, d\mu (v),    \tag{rev}
\end{equation}
where $f^*_t(v)$ denotes the density of buyers with valuation $v$ that are present at time $t$. This can be computed recursively from
\begin{align}
\tag{f*}
\begin{split}
\label{eq:fstar}
    f_1^*(v) &=  f_1(v), \\
    f_t^*(v) &=  f_t(v) + f^*_{t-1}(v) (1-r_{t-1}(v)).   
\end{split}
\end{align}
We also denote $F^*_t(v)=\int_0^v f^*_t(u)\,d\mu(u)$.

\paragraph{The mechanism design problem.} 
With the previous definitions, we can explicitly write the mechanism design problem faced by the seller, who aims to sell an inventory $I$.
\begin{equation}
\leqnomode
 \!\!\!  \!\!\!  \!\!\!   \qquad \max_{r_t, p_t, t\in[T]}\,  \rev(M) \tag{RM} \label{eq:RM}
\end{equation}
\vspace{-3ex}
\begin{align}
 \; s.t. \, & (f^*), (U), &&  \notag\\
&\sum_{t=1}^T \int_0^1  r_t(v) \cdot f^*_t(v) \, d\mu(v) \leq I, & \notag \\
& U_t(v) = \delta_t \cdot v \cdot r_t({v} ) - p_t({v}) + (1 -r_t({v})) \cdot U_{t+1}(v) &&\text{ for all }v\in[0,1], t \in [T], &\tag{IC} \label{eq:IC}\\
& p_t(0)=0 &&\text{ for all }t \in [T].& \tag{VP} \label{eq:VP}
\end{align}
Here, the first constraint family keeps track of the buyers that are present in the system at each point in time, as defined in \eqref{eq:fstar}. Moreover, in \eqref{eq:utility} the utility function is defined.
The second constraint makes sure that the seller only sells the available inventory $I$. This is true not only in expectation but for every outcome (since the number of buyers is infinite).
The third constraint family, makes sure that the optimal utility of a buyer is achieved when bidding truthfully and follows directly from \eqref{eq:utility}. The last constraint family normalizes the expected price to 0 at value 0, to ensure voluntary participation \eqref{eq:VP}. Note that, the incentive compatibility \eqref{eq:IC} and voluntary participation \eqref{eq:VP} constraints are only imposed ex-ante. This means that buyers' expected utility is maximum and non-negative when revealing their true valuation, but not necessarily for all possible outcomes. Interestingly, our main results establish that the optimal mechanism, even when optimizing over ex-ante \eqref{eq:IC} and \eqref{eq:VP} mechanisms, turns out to satisfy the constraints ex-post.

We now rewrite problem \eqref{eq:RM} eliminating the $f^*$ variables. Note that the recursion \eqref{eq:fstar} readily implies that $f^*_t(v) = \sum_{j=1}^t f_j(v)\cdot \prod_{j\leq i < t} (1-r_i(v))$. so that \eqref{eq:RM} becomes

\begin{align*}
     \max_{r_t, p_t, t\in[T]}\,  \sum_{t=1}^T \int_0^1 p_t(v) \cdot 
     \left(
     \sum_{j=1}^t f_j(v)\cdot \prod_{j\leq i < t} (1-r_i(v))
     \right)
     \, d\mu (v) 
\end{align*}
\vspace{-3ex}
\begin{align*}
 s.t. \, & (U), && \\
&\sum_{t=1}^T \int_0^1  \left(1- \prod_{j=t}^T(1-r_j(v)) \right)
\cdot f_t(v) \, d\mu(v) \leq I, & \\
& U_t(v) = \delta_t \cdot v \cdot r_t({v} ) - p_t({v}) + (1 -r_t({v})) \cdot U_{t+1}(v) &&\text{ for all }v\in[0,1], t \in [T], &(\text{IC})\\
& p_t(0)=0 &&\text{ for all }t \in [T].& (\text{VP})\\
\end{align*}
The inventory constraint follows directly by replacing the term $f^*(t)$. Intuitively this new formula is just a change of perspective. In the original formula, for each period $t$, we integrate the mass of buyers that get the item in period $t$, while here, we integrate the mass of buyers that arrive at $t$ and eventually get a copy of the item. 


\section{Optimal Mechanisms for Multi-Period Markets}

This section aims to prove \Cref{thm:rationing_optimal,thm:pricing_optimal}. 
We start by showing
\Cref{lem:derivative_utility,lem:utility_formula}, which characterize the utilities of the buyers in terms of the allocation functions $r$. Since the payments can be derived from the utilities, this also characterizes the payments in an incentive-compatible mechanism as a function of $r$. However, the converse is not true: if we apply the formula to arbitrary allocation functions $r$ to obtain payments $p$, the resulting mechanism is not necessarily incentive-compatible. 
In \Cref{lem:r_monotone}, we fill this gap. We show that if the functions $r$ are monotone, then the resulting mechanism is incentive-compatible. Conversely, we also show that for any mechanism, there are monotone functions $r$ that describe an incentive-compatible mechanism that results in at least as much revenue as the original one. In the proof of \Cref{lem:r_monotone}, we have to carefully take into account the discount factors. We defer the proof for the case of weakly and not strictly decreasing discount factors to \Cref{subsec:weakly_decreasing}.

To prove \Cref{thm:rationing_optimal,thm:pricing_optimal}, we exploit that by \Cref{lem:r_monotone}, we can restrict our attention to monotone functions $r$. Then, we can use the characterization of the payments as a function of $r$ to rewrite the revenue maximization problem of the seller as a problem with a multi-linear objective function. \Cref{lem:kplus1steps} allows us to characterize optimal solutions to this problem.


\begin{lemma}
    \label{lem:derivative_utility}
    For an anonymous and incentive-compatible mechanism $M$ with allocation functions $r_1,\dots,r_T$, we have that for all $t\in [T]$, $U_t$ is a convex function and
\begin{align}
\delta_t \cdot r_t(v) +(1 -r_t(v)) \cdot \partial U_{t+1}(v) \subseteq \partial  U_t(v). \label{eq:subdiff}
\end{align}
Moreover, for almost all $v\in [0,1]$, $\frac{d}{dv}U_t$ is well defined and satisfies
\begin{align}
    \frac{d}{dv} U_t(v) = \delta_t \cdot r_t(v) +(1 -r_t(v)) \cdot \frac{d}{dv} U_{t+1}(v).
    \label{eq:derivative_utility}
\end{align}\end{lemma}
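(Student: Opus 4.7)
My plan is to proceed by backward induction on $t$, starting from the trivial base case $U_{T+1}\equiv 0$, which is convex with zero derivative everywhere. For the inductive step, assume $U_{t+1}$ has already been shown to be convex. I would establish the convexity of $U_t$ by writing \eqref{eq:utility} as the pointwise supremum, over the report $\bar v \in [0,1]$, of the family
\[
\phi_{\bar v}(v) \;=\; \delta_t \cdot v \cdot r_t(\bar v) \;-\; p_t(\bar v) \;+\; (1 - r_t(\bar v))\cdot U_{t+1}(v).
\]
Each $\phi_{\bar v}$ is the sum of an affine function of $v$ with the convex function $U_{t+1}$ scaled by the nonnegative coefficient $1 - r_t(\bar v)\in[0,1]$, hence is convex. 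Since the pointwise supremum of convex functions is convex, $U_t$ is convex.

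For the subdifferential inclusion \eqref{eq:subdiff}, I would run an envelope-style comparison that falls directly out of IC. For any $v,v'\in[0,1]$, letting a buyer of type $v'$ misreport the value $v$ yields
\[
U_t(v') \;\geq\; \delta_t\cdot v'\cdot r_t(v)\;-\;p_t(v)\;+\;(1-r_t(v))\cdot U_{t+1}(v'),
\]
with equality at $v'=v$ by the definition of $U_t(v)$ under IC. Subtracting the equality from the inequality gives
\[
U_t(v') - U_t(v) \;\geq\; \delta_t\, r_t(v)\,(v'-v) \;+\; (1-r_t(v))\bigl(U_{t+1}(v')-U_{t+1}(v)\bigr).
\]
Now, for any $g\in \partial U_{t+1}(v)$, the inductively established convexity of $U_{t+1}$ yields $U_{t+1}(v')-U_{t+1}(v)\geq g\cdot(v'-v)$; substituting this in and using $1-r_t(v)\geq 0$ produces the linear lower bound
\[
U_t(v') - U_t(v)\;\geq\; \bigl[\delta_t\, r_t(v) + (1-r_t(v))\,g\bigr]\,(v'-v),
\]
which is precisely the statement that $\delta_t\, r_t(v) + (1-r_t(v))\,g \in \partial U_t(v)$, establishing \eqref{eq:subdiff}.

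The a.e.\ derivative formula \eqref{eq:derivative_utility} then follows from the classical fact that a convex function on a real interval is differentiable outside a set of Lebesgue measure zero, and that at such points the subdifferential is a singleton equal to the derivative. At any $v$ at which both $U_t$ and $U_{t+1}$ are differentiable (a co-null set), both sides of the inclusion \eqref{eq:subdiff} reduce to singletons, so the inclusion collapses to the numerical equality \eqref{eq:derivative_utility}.

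The one subtlety I expect to watch out for is that no regularity is assumed on $r_t$ or $p_t$ (not even measurability or monotonicity, both of which come only later in the paper). However, this never actually bites: the supremum representation delivers convexity of $U_t$ with no regularity at all, and the envelope-style inequality uses only the pointwise IC hypothesis together with the inductive convexity of $U_{t+1}$. This is essential, since \Cref{lem:derivative_utility} is used in the later proofs under the paper's very permissive distributional assumptions.
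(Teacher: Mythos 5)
Your proposal is correct and follows essentially the same route as the paper: convexity of $U_t$ by backward induction as a pointwise maximum of convex functions, the subdifferential inclusion via the envelope/IC comparison at the truthful report (your explicit subgradient derivation is just the envelope-theorem step the paper invokes, spelled out), and the a.e.\ equality by restricting to the co-null set where the convex functions $U_t$ and $U_{t+1}$ are differentiable, so the inclusion collapses to an equality of singletons.
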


\begin{proof}
Let $(p_t)_{t \in [T]}$ be the payment functions corresponding to $M$. Recall that as described in \eqref{eq:utility}, the utility of a buyer of type $v$ conditional on being present in period $t$ is given by
\[
U_t(v) = \max_{\bar{v} \in [0,1]} \left\{
\delta_t\cdot v \cdot r_t(\bar{v} ) - p_t(\bar{v}) + (1 -r_t(\bar{v})) \cdot U_{t+1}(v)
\right\}.
\]
By induction, starting from $U_{T+1}(v)=0$ and going backward, we have that $U_t$ is a convex function for every $t \in [T]$, since the maximum of a family of convex functions is also convex.
By using the envelope theorem and exploiting that the mechanism is truthful and thus the maximum is attained at $\bar{v}=v$, we get \Cref{eq:subdiff}.
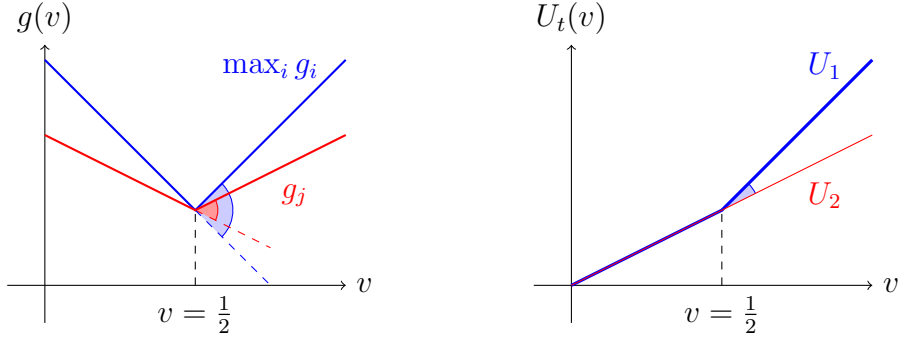
\begin{figure}[t]
\centering
    \usetikzlibrary{angles,quotes}
\begin{tikzpicture}[scale=1]

 \begin{scope}

  \draw[->] (-0.5,0) -- (4,0) node[right] {$v$};
  \draw[->] (0,-0.5) -- (0,3.2) node[above] {$g(v)$};

  \coordinate (A) at (0,0);
  \coordinate (B) at (2,1);
  \coordinate (C) at (4,3);
  \coordinate (D) at (4,2);
  \coordinate (E) at (0,3);
  \coordinate (F) at (0,2);
  \coordinate (G) at (3,0);
  \coordinate (H) at (3,0.5);

     \draw pic[draw=blue, fill=blue!20, angle eccentricity=1.2, angle radius=0.5cm]
    {angle=G--B--C};
    \draw pic[draw=red, fill=red!40, angle eccentricity=1.2, angle radius=0.3cm]
    {angle=H--B--D};
    \draw[dashed, red] (B) -- (H);
    \draw[dashed, blue] (B) -- (G);

  \draw[blue, thick] (E) -- (B) -- (C);
 \draw[red, thick] (F)-- (B) -- (D);

  \draw[dashed] (2,0) -- (2,1);

  \node[below] at (2,0) {$v=\frac{1}{2}$};

  \node[above, blue] at (3,2.6) {$\max_i g_i$};
  \node[right, red] at (3,1.2) {$g_j$};

   \end{scope}

    \begin{scope}[shift={(7,0)}]

  \draw[->] (-0.5,0) -- (4,0) node[right] {$v$};
  \draw[->] (0,-0.5) -- (0,3.2) node[above] {$U_t(v)$};

  \coordinate (A) at (0,0);
  \coordinate (B) at (2,1);
  \coordinate (C) at (4,3);
  \coordinate (D) at (4,2);
  
    \draw pic[draw=blue, fill=blue!20, angle eccentricity=1.2, angle radius=0.5cm]
    {angle=D--B--C};
    
  \draw[blue, very thick] (A) -- (B) -- (C);
 \draw[red] (A) -- (D);

  \draw[dashed] (2,0) -- (2,1);

  \node[below] at (2,0) {$v=\frac{1}{2}$};

  \node[above right, blue] at (3,2.6) {$U_1$};
  \node[right, red] at (3,1.2) {$U_2$};

   \end{scope}

\end{tikzpicture}
    \caption{The picture has the goal to visualize \Cref{eq:subdiff}. On the left we see a schematic picture, where the blue curve is the maximum over several curves $g_i$. At each point the subdifferential of each such $g_i$ is a subset of the subdifferential of the maximum. This can be seen at $v=\tfrac{1}{2}$, where the derivative is not defined and subdifferential of the blue curve (marked in blue) contains the subdifferential of the red curve (marked in red). On the right the same situation is depicted but for a concrete game. In period 1 a good is sold for a fixed price of $\frac{1}{2}$. In period 2, the same good is sold for a price of $0$ but only assigned with probability $\frac{1}{2}$. For $\delta_t=1$ for all $t \in [T]$, this results in the utility curves $U_1$ and $U_2$, where $U_1$ is not differentiable at $v=\frac{1}{2}$. The derivative of $U_2$ which is $\frac{1}{2}$ is in the subdifferential of $U_1$ (which is all tangents with slope in $[\frac{1}{2},1]$ and marked in blue). Since $r_1(\frac{1}{2})=0$, \Cref{eq:subdiff} reduces to $\frac{1}{2} \in [\frac{1}{2},1]$.}
    \label{fig:subgradient}
\end{figure}
Since $U_t$ and $U_{t+1}$ are convex, they are differentiable almost everywhere and absolutely continuous.
Thus, $\frac{d}{dv} U_t(v)$ and $\frac{d}{dv} U_{t+1}(v)$ are well defined almost everywhere.
Note that when $\frac{d}{dv} U_t(v)$ is well defined, $\partial U_t(v)= \{\frac{d}{dv} U_{t+1}(v) \}$. Therefore, replacing this in \Cref{eq:subdiff} we obtain \Cref{eq:derivative_utility}.
\end{proof}

\begin{lemma}
\label{lem:utility_formula}
    For an anonymous and incentive-compatible mechanism $M$ with allocation functions $r_1,\dots,r_T$, we have that
    \begin{align*}
        U_t(v) = \int_0^v \sum_{j=t}^T \delta_j\cdot  r_j(u)\cdot \prod_{t\leq i<j} \left(1-r_i(u)\right)\, du.
    \end{align*}
\end{lemma}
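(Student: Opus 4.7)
The plan is to combine the two ingredients we already have in hand: the recursive formula for $\frac{d}{dv}U_t$ from \Cref{lem:derivative_utility}, and the fact that $U_t$ is convex (hence absolutely continuous), so that the Fundamental Theorem of Calculus allows us to recover $U_t$ from its almost-everywhere derivative.

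First I would establish the boundary condition $U_t(0)=0$ for all $t\in[T+1]$. Starting from $U_{T+1}\equiv 0$ and working backward, plugging $v=0$ into the incentive-compatible identity
\[
U_t(0) = \delta_t\cdot 0\cdot r_t(0) - p_t(0) + (1-r_t(0))\cdot U_{t+1}(0)
\]
together with \eqref{eq:VP}, i.e. $p_t(0)=0$, and the inductive hypothesis $U_{t+1}(0)=0$ gives $U_t(0)=0$. This is the step where \eqref{eq:VP} really bites; without it the integral formula would pick up an additive constant.

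Next, since $U_t$ is convex by \Cref{lem:derivative_utility}, it is absolutely continuous on $[0,1]$, so
\[
U_t(v) = U_t(0) + \int_0^v \tfrac{d}{du}U_t(u)\, du = \int_0^v \tfrac{d}{du}U_t(u)\, du.
\]
It then suffices to show, by backward induction on $t$, that for almost every $u\in[0,1]$,
\[
\frac{d}{du}U_t(u) \;=\; \sum_{j=t}^T \delta_j\cdot r_j(u)\cdot\prod_{t\leq i<j}(1-r_i(u)).
\]
The base case $t=T+1$ is the empty sum, which is $0$, matching $U_{T+1}\equiv 0$. For the inductive step, I would use \Cref{eq:derivative_utility} and the inductive hypothesis for $t+1$:
\begin{align*}
\tfrac{d}{du}U_t(u) &= \delta_t r_t(u) + (1-r_t(u))\tfrac{d}{du}U_{t+1}(u) \\
&= \delta_t r_t(u) + (1-r_t(u))\sum_{j=t+1}^T \delta_j r_j(u)\prod_{t+1\leq i<j}(1-r_i(u)) \\
&= \sum_{j=t}^T \delta_j r_j(u)\prod_{t\leq i<j}(1-r_i(u)),
\end{align*}
where in the last step the factor $(1-r_t(u))$ is absorbed into the product and the $j=t$ term (with empty product equal to $1$) accounts for $\delta_t r_t(u)$. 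Integrating from $0$ to $v$ and invoking Fubini/Tonelli to swap the finite sum with the integral yields the claimed formula.

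I do not anticipate a real obstacle: the two genuine inputs are \eqref{eq:VP} (for $U_t(0)=0$) and \Cref{eq:derivative_utility} (for the recursion), both already available. The only point requiring minor care is that \Cref{eq:derivative_utility} only holds almost everywhere, but this is precisely what the absolute continuity of the convex functions $U_t$ is designed to handle via the Fundamental Theorem of Calculus.
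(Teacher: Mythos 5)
Your proof is correct and follows essentially the same route as the paper: obtain $U_t(0)=0$ from \eqref{eq:IC} and \eqref{eq:VP} at $v=0$, iterate the derivative recursion \eqref{eq:derivative_utility} from \Cref{lem:derivative_utility}, and use the absolute continuity of the convex functions $U_t$ to integrate the almost-everywhere derivative. The only cosmetic difference is that you spell out the backward induction explicitly (and invoke Fubini where simple linearity of the integral over a finite sum suffices), whereas the paper states the iteration directly.
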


\begin{proof}
    Iterating \eqref{eq:derivative_utility}  gives the following almost everywhere
        \begin{align*}
        \frac{d}{dv} U_t(v) =\sum_{j=t}^T \delta_j\cdot  r_j(u)\cdot \prod_{t\leq i<j} \left(1-r_i(u)\right).
    \end{align*}
Moreover, by iterating \eqref{eq:IC} and \eqref{eq:VP} at $v=0$, we obtain that $U_t(0)=0$ for all $t \in [T]$. Since $U_t$ is convex, it is absolutely continuous, so $U_t(v)=\int_0^v \frac{d}{du} U_t(u)\, du$, which gives the formula in the statement of the lemma.
\end{proof}

\begin{lemma}
\label{lem:r_monotone}
	For every anonymous mechanism $M$ there exists a collection of monotone non-decreasing functions $\bar{r}_1,\dots,\bar{r}_T$ corresponding to an anonymous, and incentive-compatible mechanism $\bar{M}$ that achieves as much revenue as $M$. Conversely, given non-decreasing functions $r_1,\dots, r_T$, there is an anonymous and incentive-compatible mechanism $M$, for which these functions describe the allocation.
\end{lemma}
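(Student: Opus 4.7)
The lemma has two directions, and I would prove them separately. The converse direction (monotone $r$ admits an IC mechanism) is a direct adaptation of Myerson's envelope-payment recipe, while the forward direction (IC already forces monotone $r$, except in a boundary case that needs a revenue-preserving modification) reduces to exploiting the convexity structure imposed by the IC constraints on the \emph{differences} of consecutive utility functions.

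For the converse, given non-decreasing $r_1,\dots,r_T$ I would define $U_t$ by the formula in \Cref{lem:utility_formula} and set
\[
p_t(v) \;:=\; \delta_t\, v\, r_t(v) - U_t(v) + (1-r_t(v))\,U_{t+1}(v).
\]
By construction this satisfies the recursive utility identity and $p_t(0)=0$. The main step is to verify IC: computing the utility from a report $\bar v \neq v$ and subtracting it from $U_t(v)$, an integration using the derivative identity in \Cref{lem:derivative_utility} collapses the gap to
\[
\int_{\bar v}^v \bigl(r_t(u) - r_t(\bar v)\bigr)\Bigl(\delta_t - \tfrac{d}{du} U_{t+1}(u)\Bigr)\, du,
\]
which is nonnegative for every $\bar v$: the first factor is sign-correct by monotonicity of $r_t$, and the second is nonnegative since a straightforward backward induction on the formula in \Cref{lem:derivative_utility} yields $\tfrac{d}{du} U_{t+1}(u) \leq \delta_{t+1} \leq \delta_t$.

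For the forward direction, by the revelation principle I may assume $M$ is IC. For any $v_1 < v_2$, writing the two IC inequalities (neither type wants to mimic the other) and substituting the identity $p_t(v) = \delta_t v r_t(v) - U_t(v) + (1-r_t(v)) U_{t+1}(v)$ in both, rearrangement yields the sandwich
\[
r_t(v_1)\, A \;\leq\; (U_t - U_{t+1})(v_2) - (U_t - U_{t+1})(v_1) \;\leq\; r_t(v_2)\, A,
\]
where $A := \delta_t(v_2-v_1) - (U_{t+1}(v_2)-U_{t+1}(v_1))$. The same backward induction gives $A \geq (\delta_t - \delta_{t+1})(v_2-v_1)$ (with the convention $U_{T+1}\equiv 0$ for the base case $t=T$), so $A>0$ whenever $\delta_t > \delta_{t+1}$. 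Dividing the sandwich by $A$ yields $r_t(v_1) \leq r_t(v_2)$, so the allocations of $M$ itself are already monotone and we may take $\bar M = M$.

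The main obstacle is precisely the case where $\delta_t = \delta_{t+1}$ for some $t$: then $A$ can vanish on nontrivial subsets of $[0,1]$ and the sandwich becomes vacuous, and indeed IC mechanisms with non-monotone $r_t$ genuinely exist in this regime. Handling it is the entire content of the deferred part: my plan, which I leave to \Cref{subsec:weakly_decreasing}, is to group consecutive periods sharing a common discount factor into a single block and redistribute the allocation within each block so that the resulting $\bar r_t$ become monotone while the revenue only weakly increases.
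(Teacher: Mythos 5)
For strictly decreasing discount factors your argument is correct and is essentially the paper's proof in discrete form. Your converse direction constructs exactly the same payment rule as the paper: since you define $U_t$ by the formula of \Cref{lem:utility_formula}, your $p_t(v)=\delta_t v r_t(v)-U_t(v)+(1-r_t(v))U_{t+1}(v)$ coincides with the paper's payment, and your IC check, which reduces the misreporting gap to $\int_{\bar v}^{v}\bigl(r_t(u)-r_t(\bar v)\bigr)\bigl(\delta_t-\tfrac{d}{du}U_{t+1}(u)\bigr)\,du\ge 0$, is the same estimate the paper performs, using the same bound $\tfrac{d}{du}U_{t+1}\le\delta_{t+1}\le\delta_t$. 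Your forward direction replaces the paper's subdifferential machinery (left/right limits of $\tfrac{d}{du}U_t$ and monotonicity of the quotient $(\tfrac{d}{du}U_t-\tfrac{d}{du}U_{t+1})/(\delta_t-\tfrac{d}{du}U_{t+1})$) by a two-type sandwich $r_t(v_1)A\le (U_t-U_{t+1})(v_2)-(U_t-U_{t+1})(v_1)\le r_t(v_2)A$ with $A\ge(\delta_t-\delta_{t+1})(v_2-v_1)$; this is a legitimate and slightly more elementary variant (it avoids differentiability issues and the convexity claim for $U_t-U_{t+1}$), but it rests on the same structural facts, so for this case you may indeed take $\bar M=M$.

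The genuine gap is the weakly decreasing case, which you only sketch in one sentence ("group periods into blocks and redistribute the allocation"). That case is exactly where the lemma's nontrivial content lies: non-monotone IC allocations do exist there, and the paper devotes all of \Cref{subsec:weakly_decreasing} to the fix. An unspecified "redistribution within a block" is not automatically incentive-compatible or revenue-preserving; the construction that works is quite specific. Within a maximal block $\bar T$ of periods sharing the same $\delta$, one considers the set $I=\{(t,v): \tfrac{d}{dv}U_t(v)=\delta_t\}$, shows it has a staircase shape and that every buyer with $(t,v)\in I$ is served with probability one within the block under $M$, and then sets $\bar r_t(v)=1$ on $I$ while leaving $r$ unchanged elsewhere. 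Monotonicity of $\bar r_t$ follows from the staircase structure, and revenue preservation follows from a telescoping identity showing that each buyer's discounted allocation stream $\sum_{k\ge t}\delta_k r_k(v)\prod_{t\le i<k}(1-r_i(v))$ is unchanged, so that both social welfare and (via \Cref{lem:utility_formula}) all utilities, hence the revenue, coincide for $M$ and $\bar M$. Without an argument of this kind, your proof of \Cref{lem:r_monotone} is incomplete precisely in the regime that motivates the "there exists $\bar M$" phrasing of the statement.
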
	

In what follows, we give a proof of \Cref{lem:r_monotone} for the case of strictly decreasing discount factor $\delta_t$. The case of weakly decreasing $\delta_t$ requires extra arguments that make the proof rather tedious, which we relegate to \Cref{subsec:weakly_decreasing}.

\begin{proof}
Note that we can assume that $M$ is incentive compatible. If it is not, we can modify the mechanism such that the players submit their bids first to some algorithm that computes an optimal strategy for the original mechanism.

Recall that from \Cref{lem:derivative_utility}, we have
\begin{align*}
\delta_t \cdot r_t(v) +(1 -r_t(v)) \cdot \partial U_{t+1}(v) \subseteq \partial  U_t(v).
\end{align*}

Since $U_t$ is convex, not only $\ \frac {d}{dv} U_t(v)$ is well defined almost everywhere, but it is also monotone and has limits from the left and from the right at every point. Therefore, we have that for all $t\in [T]$, $v \in [0,1]$
\begin{align*}
    \partial U_t(v) = \left[\lim_{u\rightarrow v^-} \frac{d}{du} U_t(u), \lim_{u\rightarrow v^+} \frac{d}{du} U_t (u) \right].
\end{align*}

Then, for the points in which the derivative is not well defined, \Cref{eq:subdiff} is equivalent to
\begin{align*}
    \lim_{u\rightarrow v^-} \frac{d}{du} U_t(u)
    &\leq
    \delta_t\cdot r_t(v) + (1-r_t(v))\cdot \lim_{u\rightarrow v^-} \frac{d}{du} U_{t+1}(u)\\
    &\leq 
    \delta_t\cdot r_t(v) + (1-r_t(v))\cdot \lim_{u\rightarrow v+} \frac{d}{du} U_{t+1}(u)
    \leq
    \lim_{u\rightarrow v^+} \frac{d}{du} U_t(u).
\end{align*}

Rearranging, we obtain that
\begin{align}
\lim_{u\rightarrow v^-} \frac{\frac{d}{du} U_t(u) - \frac{d}{du}U_{t+1}(u)}{\delta_t -\frac{d}{du}U_{t+1}(u)} \leq r_t(v) \leq \lim_{u\rightarrow v^+} \frac{\frac{d}{du} U_t(u) - \frac{d}{du}U_{t+1}(u)}{\delta_t -\frac{d}{du}U_{t+1}(u)}. \label{eq:r_monotone_quotient}
\end{align}

Now we show that, if the denominator never becomes $0$, the quotient inside the limits is a monotone non-decreasing function. 
Since the limits exist and coincide almost everywhere, that implies that $r_t$ is also a monotone non-decreasing function.
In fact, notice that the numerator is non-decreasing because it is the derivative of the convex function 
\[
 U_t(v) - U_{t+1}(v)  = \max_{\bar{v} \in [0,1]} \left\{
\delta_t\cdot v \cdot r_t(\bar{v} ) - p_t(\bar{v}) -r_t(\bar{v}) \cdot U_{t+1}(v) \right\} . 
\]
This reformulation results by subtracting $U_{t+1}(v)$ on both sides of \eqref{eq:utility}. The expression on the right is indeed convex as it is the maximum of convex functions. Thus, the numerator, which is the derivative of a convex function, is non-decreasing. The denominator is non-increasing because $\frac{d}{du} U_{t+1}(u)$ is non-decreasing. 

Finally, it is easy to show by induction that $\frac{d}{du} U_{t+1}(u)\leq \delta_{t+1} \leq \delta_t$. Thus, for a strictly decreasing discount factor,  we conclude that the denominator is never zero, so the quotient is well-defined almost everywhere, and it is non-decreasing. We simply choose $\bar r_t(v) =r_t(v)$ for all $t \in [T]$ and $v \in [0,1]$.

In the case that the discount factors are not monotonically strictly decreasing, the denominators in \Cref{eq:r_monotone_quotient} can become zero, so the previous argument is not sufficient to conclude the monotonicity of $r_t$. However, when the denominators become zero we can modify $r_t$ to obtain a mechanism $\bar{M}$ with monotone functions $\bar{r}_t$ that gets as much revenue as $M$. We give the full construction in
\Cref{subsec:weakly_decreasing}. 

To prove the last part of the lemma, we show now that for any non-decreasing functions $r_1,\dots,r_T$, there is an incentive-compatible mechanism for which these functions are the allocation functions.\footnote{This part does not require the discount factor to be strictly decreasing.} Notice that $U_{t+1}$ only depends on how the mechanism is defined in periods $t+1,\dots, T$. We define the mechanism recursively from $T$ backward. In period $t$, a buyer that bids $v$ gets the item with probability $r_t(v)$ and pays
\begin{align*}
    p_t(v) = \delta_t\cdot v \cdot r_t(v) +(1-r_t(v))\cdot U_{t+1}(v) - \int_0^v \delta_t \cdot r_t(u) + (1-r_t(u))\cdot \frac{d}{du}U_{t+1}(u) \, du.
\end{align*}
First, observe that we have that $p_t(0)=0$ and thus voluntary participation, since the first and the last summand vanish directly and moreover, we observed already that $U_{\bar t}(0)=0$ for all $\bar t \in [T]$.
Next, we show that with this payment function, in period $t$, it is optimal for a buyer of type $v$ to bid $\bar v = v$. In fact, for a buyer of type $v$, the utility of bidding $\bar v$ in period $t$ is
\begin{align}
    &\delta_t\cdot v \cdot r_t(\bar v) + (1-r_t(\bar v))\cdot U_{t+1}(v) \notag\\
    &\qquad-\left(\delta_t\cdot \bar v \cdot r_t(\bar v) +(1-r_t(\bar v))\cdot U_{t+1}(\bar v) - \int_0^{\bar v}  \delta_t \cdot r_t(u) + (1-r_t(u))\cdot \frac{d}{du}U_{t+1}(u) \, du\right) \notag \\
    &= \delta_t\cdot r_t(\bar v)\cdot (v-\bar v)
    + (1-r_t(\bar v))\cdot (U_{t+1}(v)-U_{t+1}(\bar v)) \notag \\
    &\qquad +\int_0^{\bar v}  \delta_t \cdot r_t(u) + (1-r_t(u))\cdot \frac{d}{du}U_{t+1}(u) \, du.
    \label{eq:utility_induction_payments_monotone_r}
\end{align}
Since $\frac{d}{du} U_{t+1}(u)\leq \delta_{t+1}\leq \delta_t$, and because $r_t(u)$ is non-decreasing, we have that for all $\bar v \in [0,1]$,
\begin{align*}
   & \int_{\bar v}^{v}  \delta_t \cdot r_t(u) + (1-r_t(u))\cdot \frac{d}{du}U_{t+1}(u) \, du\\
   & \geq 
    \int_{\bar v}^v \delta_t \cdot r_t(\bar{v}) + (1-r_t(\bar{v}))\cdot \frac{d}{du}U_{t+1}(u) \, du\\
   & \geq
    \delta_t\cdot r_t(\bar v)\cdot (v-\bar v)
    + (1-r_t(\bar v))\cdot (U_{t+1}(v)-U_{t+1}(\bar v)).
\end{align*}
Replacing this back in \Cref{eq:utility_induction_payments_monotone_r}, we obtain that it is maximized at $\bar v = v$.
\end{proof}

We use the following technical result, which is a consequence of Caratheodory's theorem and the sparsity of optimal solutions of linear programs. Equivalent observations have been used in mechanism design~\cite{K21,DS23} and stochastic optimization~\cite{N15}. For completeness, we present a proof in the following.

\begin{lemma}
\label{lem:kplus1steps}
Consider the problem of choosing a function $h:[0,1]\rightarrow [0,1]$ that maximizes a linear, continuous, and bounded operator $J(h)$, under the constraints that $h$ is non-decreasing and $G_i(h)\leq b_i$, for $i \in [k]$ and some linear, continuous, and bounded operators $G_1,\dots, G_k$ and constants $b_1,\dots, b_k$. There is an optimal solution $h^*$ that is a monotone step function with at most $k+1$ steps. Moreover, if it has exactly $k+1$ steps, then $h^*(1)=1$.
\end{lemma}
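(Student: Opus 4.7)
The approach is to reparametrize the feasible set so that monotonicity is built in, rewrite the optimization as a linear program over measures, and then run a simplex-style support-reduction (Caratheodory-type) argument in dimension $k+1$.

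Every right-continuous non-decreasing $h\colon[0,1]\to[0,1]$ is uniquely represented as
\begin{align*}
h(v)=\nu([0,v])=\int_{[0,1]}\mathbb{1}_{[a,1]}(v)\,d\nu(a)
\end{align*}
by a non-negative Borel measure $\nu$ on $[0,1]$ with $\nu([0,1])=h(1)\leq 1$. Since $J$ and each $G_i$ are linear, continuous, and bounded, a Fubini-type interchange yields $J(h)=\int j(a)\,d\nu(a)$ and $G_i(h)=\int g_i(a)\,d\nu(a)$, where $j(a):=J(\mathbb{1}_{[a,1]})$ and $g_i(a):=G_i(\mathbb{1}_{[a,1]})$. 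The problem becomes
\begin{align*}
\max\,\int j\,d\nu\quad\text{s.t.}\quad \int g_i\,d\nu\leq b_i\ (i\in[k]),\quad \nu([0,1])\leq 1,\quad \nu\geq 0,
\end{align*}
i.e., a linear program over the cone of non-negative measures with exactly $k+1$ linear inequality constraints (the $k$ given ones plus the mass constraint).

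The core step is to show that this LP admits an optimum supported on at most $k+1$ points. I would pick an optimal $\nu^*$ that minimizes the support cardinality among all optima, and suppose for contradiction that it has $m\geq k+2$ atoms $a_1,\dots,a_m$ with positive weights $w_1,\dots,w_m$. Let $l\leq k+1$ be the number of constraints that are tight at $\nu^*$. The weight perturbations $\epsilon\in\R^m$ that preserve the values of the active constraints form the kernel of an $l\times m$ matrix, which has dimension at least $m-l\geq 1$. Fix any non-trivial $\epsilon$ in this kernel; for sufficiently small $|t|$, the perturbed measure with weights $w_j+t\epsilon_j$ is feasible, and by optimality (applied to $\pm t$) the linear objective is constant along this one-parameter family. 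Enlarging $|t|$ until either an atom's weight first hits zero, or a currently slack constraint first becomes tight, produces an optimal measure with strictly smaller support (first case) or strictly more active constraints (second case). Since $m-l$ strictly decreases at each such step, after finitely many iterations we reach $m\leq l\leq k+1$, contradicting the minimality of the initial support. Translating back, $\nu^*$ on $\{a_1<\dots<a_m\}$ with weights $w_j$ corresponds to the monotone step function $h^*(v)=\sum_{j:a_j\leq v}w_j$ with $m\leq k+1$ jumps.

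For the moreover statement, if at termination $m=k+1$, then the iteration halted only because $m\leq l$, which forces $l=k+1$: every constraint is tight. In particular, the mass constraint $\nu^*([0,1])\leq 1$ is tight, so $h^*(1)=\nu^*([0,1])=1$. The main technical subtlety is making the Caratheodory / basic-feasible-solution argument rigorous in the infinite-dimensional space of measures on $[0,1]$; the boundedness of $J$ and the $G_i$ together with the compactness of $[0,1]$ make the required reductions standard, and analogous support-reductions appear in the mechanism-design and stochastic-optimization literature already cited in the paper.
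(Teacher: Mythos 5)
Your overall architecture is close in spirit to the paper's (monotone functions as mixtures of single-step indicators, only $k+1$ linear functionals matter, then LP sparsity), but there is a genuine gap at the decisive step. Your support-reduction argument begins by ``picking an optimal $\nu^*$ that minimizes the support cardinality among all optima'' and supposing it has $m\geq k+2$ atoms; nothing in the proposal guarantees that \emph{any} optimal measure is finitely supported. An optimum could perfectly well be, say, absolutely continuous, in which case there is no finitely supported optimum to perturb, ``minimal support cardinality'' is not attained, and the contradiction argument never starts. The passage from an arbitrary monotone optimum to a finitely supported one is exactly the content of the lemma, and your closing remark that making the Caratheodory argument rigorous in infinite dimensions is ``standard'' defers precisely this point rather than proving it. The paper closes it explicitly: it pushes the problem forward to $\R^{k+1}$ via $\mathcal{J}(h)=(J(h),G_1(h),\dots,G_k(h))$, uses that $\mathcal{J}(\cl(\conv(H)))=\conv(\mathcal{J}(H))$ because the convex hull of a compact set in $\R^{k+1}$ is closed, and then applies Caratheodory to write $\mathcal{J}(h^*)$ as a convex combination of $k+2$ single-step functions. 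In your parametrization the analogous move is to apply Caratheodory to the moment vector $\left(\int j\,d\nu^*,\int g_1\,d\nu^*,\dots,\int g_k\,d\nu^*,\nu^*([0,1])\right)$, which lies in the convex hull of the compact set of moment vectors of scaled Dirac measures together with the zero measure; this yields a finitely supported optimum, and only after that do your kernel-perturbation step (which is just basic-solution sparsity of a finite LP, as in the paper) and your tightness argument for the ``moreover'' part go through.

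A secondary issue: you restrict attention to right-continuous $h$ via the Lebesgue--Stieltjes representation $h(v)=\nu([0,v])$. The paper makes no absolute-continuity assumption, so the underlying measure $\mu$ may have atoms, and then $J(h)$ genuinely depends on the value of $h$ \emph{at} an atom; optimizing only over right-continuous monotone functions can therefore lose value. This is why the paper's class $H$ contains both $\mathds{1}_{\{a\leq x\}}$ and $\mathds{1}_{\{a< x\}}$. The point is fixable (carry both indicator types, or represent $h$ by a pair of measures), but it must be addressed for the lemma to be usable, as in the paper, without regularity assumptions on the distributions.
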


\begin{proof}
    For some finite measure $\mu$ of $[0,1]$, we assume the continuity of $J$ and $G_1,\dots,G_k$ as functions from $L^1([0,1],\mu)$ into $\R$.
Let $H$ be the set of single-step functions, i.e., of all functions of the form $h(x)= \mathds{1}_{\{a\leq x\}}$ or $h(x)= \mathds{1}_{\{a<x\}}$ for some $a\in [0,1]$. Notice that $H$ is a closed set in $L^1([0,1],\mu)$. Notice also that the set of monotone functions is the closure of the convex hull of $H$. In other words, any monotone function is the limit of a sequence of convex combinations of single-step functions. Denote this by $\cl(\conv(H))$. Consider the linear and continuous operator $\mathcal{J}$ from $L^1([0,1],\mu)$ to $\R^{k+1}$ defined as $\mathcal{J}(h)=(J(h),G_1(h),\dots,G_k(h))$. Since $\mathcal{J}$ is linear and continuous, and $H$ is a closed set, and because in $\R^{k+1}$ the convex hull of a bounded and closed set is always closed, we have that $\mathcal{J}(\cl(\conv(H))) = \cl(\conv(\mathcal{J}(H))) = \conv(\mathcal{J}(H))$. Consider an optimal solution $h^*\in \cl(\conv(H))$. By Caratheodory's theorem, $\mathcal{J}(h^*) \in \conv(\mathcal{J}(H))\subseteq \R^{k+1}$ can be written as a convex combination of $k+2$ points in $\mathcal{J}(H)$, which means there are functions $h_1,\dots, h_{k+2} \in H$ and coefficients $\alpha_1,\dots,\alpha_{k+2}$ such that $\mathcal{J}(h^*)=\sum_{i=1}^{k+2} \alpha_i \mathcal{J}(h_i)=\mathcal{J}\left( \sum_{i=1}^{k+2} \alpha_i h_i\right)$. Since $\mathcal{J}$ encodes feasibility and the value in the optimization problem, this means that $\hat{h}=\sum_{i=1}^{k+2} \alpha_i h_i$, which is a monotone step function with $k+2$ steps, is also an optimal solution. We can further reduce the number of steps to $k+1$ by noting that the linear program
\begin{align*}
    \max &\sum_{i=1}^{k+2} \alpha_i J(h_i), \\ 
    \text{ s.t. } & \sum_{i=1}^{k+2} \alpha_i G_j(h_i) \leq b_j, && \text{for all }j=1,\dots, k,\\ 
    & \sum_{i=1}^{k+2} \alpha_i =1,\\ 
    & \alpha_i \geq 0, && \text{for all }i=1,\dots, k+2.
\end{align*}
has $k+1$ constraints, so an optimal basic solution has at most $k+1$ non-zero variables.
Moreover, when there are $k+1$ non-zero variables resulting in $k+1$ steps, we have that all non-trivial constraints are tight and thus especially $\sum_{i=1}^{k+2} \alpha_i =1$, which results in $h^*(1)=1$.
\end{proof}

Now we are well-prepared to prove \Cref{thm:rationing_optimal,thm:pricing_optimal}.

\begin{proof}[Proof of \Cref{thm:rationing_optimal} and \Cref{thm:pricing_optimal}]

Because of \Cref{lem:r_monotone}, we can restrict our attention to monotone functions $r_1,\dots,r_T$ corresponding to an incentive-compatible mechanism $M$.
Given these functions and \Cref{lem:utility_formula}, we have a formula for the payments in terms of $r_1,\dots,r_T$, and integrating we get also a formula for the revenue.

Putting together \Cref{eq:utility}, the incentive compatibility, and \Cref{lem:utility_formula}, we have that
\begin{align*}
    p_t(v) &= \delta_t\cdot v\cdot r_t(v) + (1-r_t(v))\cdot U_{t+1}(v) - U_t(v)\\
    &= \delta_t\cdot v\cdot r_t(v) + (1-r_t(v))\cdot 
    \int_0^v \sum_{j=t+1}^T \delta_j\cdot  r_j(u)\cdot \prod_{t+1 \leq i<j} \left(1-r_i(u)\right)\, du\\
    &\qquad \qquad-
    \int_0^v \sum_{j=t}^T \delta_j\cdot  r_j(u)\cdot \prod_{t\leq i<j} \left(1-r_i(u)\right)\, du.
\end{align*}
Notice that from this formula we obtain that $p_t(v)$ is linear in each $r_j$ (separately, not jointly for all $j$), and that it depends only on $r_j$ with $j\geq t$.
Recall that the objective in problem \eqref{eq:RM} can be written as 
\begin{align*}
    \rev(M) &= \sum_{t=1}^T \int_0^1 p_t(v) \cdot 
     \left(
     \sum_{j=1}^t f_j(v)\cdot \prod_{j\leq i < t} (1-r_i(v))
     \right)
     \, d\mu (v) .    
\end{align*}

Note that for every $t \in [T]$, $i \in [T]$, we have that in the $t$-th term of the sum, $r_i$ appears only once. Therefore, the revenue is linear in each $r_i$.


For a given period $t$, consider the problem of maximizing the revenue as a function of $r_t$, taking all other functions as fixed. We already showed that the revenue is linear in $r_t$, and the only constraints we have are that $r_t$ must be non-decreasing and that the mechanism does not sell more items than the inventory $I$. 
Recall from \eqref{eq:RM} that the inventory constraint can be written as:
\begin{align*}
    \sum_{t=1}^T \int_0^1 \left(1-\prod_{j=t}^T (1-r_j(v)) \right) \cdot f_t(v) \, d\mu(v) \leq I,
\end{align*}
which is also linear in $r_t$.
Applying \Cref{lem:kplus1steps} for $k=1$, we obtain that there is an optimal solution $r^*_t$ that is a step function with at most $2$ steps. Moreover, if the inventory constraint is non-binding, then we can apply the lemma for $k=0$, so the optimal solution is a single-step function.

For the case of a single-step function, we show that the allocation can be implemented with a posted price. Let $q_t$ be the value at which the step occurs, i.e., $r^*_t(v)$ is zero for $v<q_t$ and one for $v>q_t$. We take the price
\begin{align*}
    p_t=\delta_t\cdot q_t - U_{t+1}(q_t).
\end{align*}
For a buyer of type $v$, the utility that results from buying at this price is
\begin{align*}
    \delta_t\cdot v -p_t = \delta_t \cdot (v-q_t) + U_{t+1}(q_t),
\end{align*}
whereas the utility of waiting for the next period is $U_{t+1}(v)$. Notice that these two are equal at $v=q_t$, so a buyer of type $v=q_t$ is indifferent. Notice also that the utility from buying at $t$ has derivative $\delta_t$, while  $\frac{d}{dv}U_{t+1}(v)\leq \delta_{t+1}$. 
Thus in case $\delta_t> \delta_{t+1}$ buyers of type $v<q_t$ strictly prefer waiting and buyers of type $v>q_t$ strictly prefer buying at $t$. If $\delta_t=\delta_{t+1}$, both derivatives could coincide, and in that case, buyers of type $v>q_t$ are also indifferent.\footnote{In this case,  $r_t^*$ is an equilibrium, but there could be other equilibria beside $r_t^*$.} Therefore, posting the price $p_t$ implements the allocation $r_t^*$.

For the case of a two-step function, we show that the allocation can be implemented with two prices: a high price at which the buyer gets the good for sure and a limited inventory at a lower price. Let $q_t^\ell < q_t^h$ be the points at which the steps of $r_t^*$ occur. Let $r_t^\ell= \lim_{v\rightarrow q_t^\ell+} r_t^*(v)$. We set prices
\begin{align*}
    p_t^\ell &= r_t^\ell\cdot \left(\delta_t\cdot q_t^\ell-U_{t+1}(q_t^\ell)\right),\\
    p_t^h &= \delta_t\cdot q_t^h - \left(
    r_t^\ell \cdot \delta_t\cdot q_t^h - p_t^\ell + (1-r_t^\ell)\cdot U_{t+1}(q_t^h)
    \right).
\end{align*}
We set the inventory to be sold at the low price so that the probability of getting the good is exactly $r_t^\ell$, that is, the mass of buyers that want to pay the low price divided by $r_t^\ell$.
Note that $f^*$ (see \eqref{eq:fstar}) is well-defined since $r_t$ is given. 
\begin{align*}
    I_t^\ell = \frac{1}{r_t^\ell} \cdot \int_0^1 \mathds{1}_{\{r_t^*(v)=r_t^\ell\}} \cdot f^*_t(v) \, d\mu(v)
\end{align*}
The prices are set so that a buyer of type $v=q_t^\ell$ is indifferent between waiting and paying the low price, and a buyer of type $v=q_t^h$ is indifferent between the low price and the high price. With a similar analysis as before, we can check that other buyers strictly prefer one of the options, so we implement $r_t^*$.
Note that this case might be degenerate in the sense that $r_t^*$ is actually a one-step function. If $r_t^*(1)=1$, then the jump point is treated as $q_t^h$. This means we define a posted price as described in the first case. If $r_t^*(1)<1$, then the jump point is handled as $q_t^\ell$. We choose the prices and the inventory as described in the case of two steps with the only difference that $q_t^h$ is chosen high enough such that it does not play any role. 
\end{proof}

 \section{Proof of \Cref{lem:r_monotone} for Weakly Decreasing Discount Factors}
\label{subsec:weakly_decreasing}

In the previous section, we proved \Cref{lem:r_monotone} in the case of strictly decreasing discount factors. If the discount factors are weakly decreasing, then the proof of the monotonicity of $r_t$ fails.
Recall that in the previous section, we proved that
\begin{align*}
\lim_{u\rightarrow v^-} \frac{\frac{d}{du} U_t(u) - \frac{d}{du}U_{t+1}(u)}{\delta_t -\frac{d}{du}U_{t+1}(u)} \leq r_t(v) \leq \lim_{u\rightarrow v^+} \frac{\frac{d}{du} U_t(u) - \frac{d}{du}U_{t+1}(u)}{\delta_t -\frac{d}{du}U_{t+1}(u)}. 
\end{align*}
In particular, we showed that $\frac{d}{du} U_{t+1}(u)\leq \delta_{t+1}$, so if the discount factor is strictly decreasing, we can show that the denominator of the above quotient is nonzero and thus, it is well-defined. Moreover, we showed that in this case the quotient is monotonically increasing. However, this argument does not work if $\delta_t=\delta_{t+1}$ for some $t$.

In fact, if $\delta_t=\delta_{t+1}$, $r_t$ can decrease in an interval where $\delta_t -\frac{d}{du}U_{t+1}(u)=0$. We will show how to modify $r$ in such intervals to obtain new monotone functions $\bar r$ without reducing the revenue. The rest of the proof of \Cref{lem:r_monotone} does not rely on the strict monotonicity of the discount factors. In particular, monotone allocation functions are sufficient to define an incentive-compatible mechanism. Therefore, our new functions $\bar r$ define an incentive-compatible mechanism.


So, consider a sequence of weakly decreasing discount factors. 
Let $\bar T$ be a maximal interval of indices such that $\delta_t$ is the same for all $t \in \bar T$. 
We define $I$ to be the set of all pairs $(t,v)$ such that $t\in \bar{T}$ and $\delta_t -\frac{d}{dv}U_{t}(v)=0$. We start by describing the shape of $I$, which is stair-like as depicted in \Cref{fig:I}.

\begin{figure}
    \centering
    \includegraphics[width=0.65\textwidth]{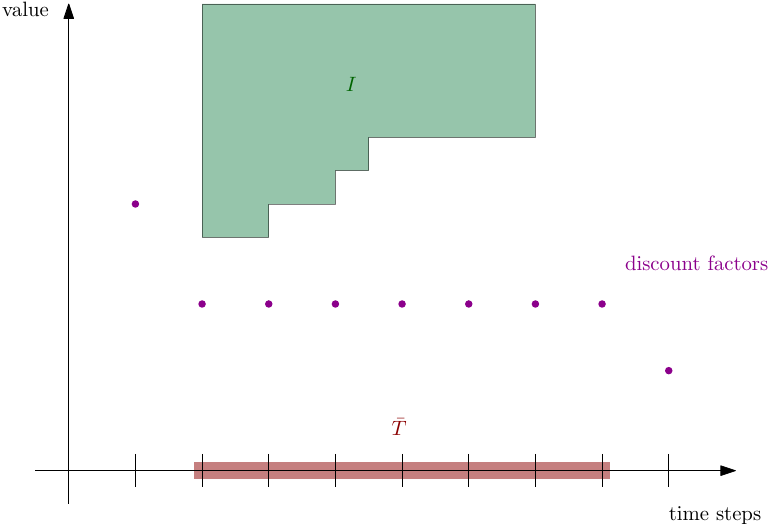}
    \caption{The marked interval of time slots is $\bar T$, the dots depict the discount factors and the region marks the shape of region $I$.}
    \label{fig:I}
\end{figure}

\begin{claim}
\label{claim:staircase}
    If $(t,v) \in I$, then for all $v'\geq v$ and all $t'\in \bar{T}$ such that $t'\leq t$, we have $(t',v') \in I$.
\end{claim}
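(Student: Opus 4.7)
The plan is to reduce the staircase statement to two independent one-dimensional closure properties of $I$ and deduce both from the recursion for $\frac{d}{dv}U_t$ in \Cref{lem:derivative_utility}. Rearranging $\frac{d}{dv}U_t(v) = \delta_t r_t(v) + (1-r_t(v))\frac{d}{dv}U_{t+1}(v)$ gives the key identity
\begin{align*}
\delta_t - \frac{d}{dv}U_t(v) = (1-r_t(v))\left(\delta_t - \frac{d}{dv}U_{t+1}(v)\right),
\end{align*}
valid almost everywhere. Both factors on the right are non-negative: $r_t(v)\in[0,1]$, and a straightforward downward induction on $t$ (base case $U_{T+1}\equiv 0$) shows $\frac{d}{dv}U_{t+1}(v) \leq \delta_{t+1} \leq \delta_t$, since $\frac{d}{dv}U_t$ is always a convex combination of $\delta_t$ and $\frac{d}{dv}U_{t+1}$.

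Next I would split the claim into two parts. Part (A): for fixed $t\in\bar T$, if $(t,v)\in I$ and $v'\geq v$, then $(t,v')\in I$. Part (B): for fixed $v$, if $(t,v)\in I$ and $t'\in\bar T$ with $t'\leq t$, then $(t',v)\in I$. Composing (B) and then (A) yields exactly the claim. For (A), convexity of $U_t$ makes $\frac{d}{dv}U_t$ monotone non-decreasing in $v$, while the convex-combination bound above yields $\frac{d}{dv}U_t(v')\leq\delta_t$ for all $v'$; so $(t,v)\in I$ forces $\delta_t=\frac{d}{dv}U_t(v)\leq\frac{d}{dv}U_t(v')\leq\delta_t$, i.e.\ equality, hence $(t,v')\in I$. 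For (B) I would do downward induction on $t'$, with the trivial base $t'=t$. For $t'<t$, since $\bar T$ is an interval of consecutive indices containing both $t'$ and $t$, also $t'+1\in\bar T$, so $\delta_{t'+1}=\delta_{t'}$. By the induction hypothesis $(t'+1,v)\in I$, so $\frac{d}{dv}U_{t'+1}(v)=\delta_{t'+1}=\delta_{t'}$. Plugging this into the key identity at $t'$ makes the second factor on the right vanish, hence $\delta_{t'}-\frac{d}{dv}U_{t'}(v)=0$, that is, $(t',v)\in I$.

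The only delicate point is that $\frac{d}{dv}U_t(v)$ is \emph{a priori} defined only almost everywhere, so $I$ is technically defined up to a null set. The clean fix is to work throughout with the right derivative $\frac{d}{dv_+}U_t(v)$, which exists everywhere for a convex function and is monotone non-decreasing; the recursion and the upper bound $\delta_t$ both transfer to right derivatives (equivalently, to elements of the subdifferential as in \Cref{lem:derivative_utility}), so every step above becomes a genuine pointwise statement without any substantive change. This is where a little care is needed, but it is not a serious obstacle, and the monotonicity of $r_t$ on the complement of $I$ (handled in the main text) then combines with the staircase structure to carry out the remaining modification of $r_t$ on $I$.
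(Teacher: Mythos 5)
Your proposal is correct and follows essentially the same route as the paper: closure of $I$ upward in $v$ via monotonicity of $\frac{d}{dv}U_t$ together with the bound $\frac{d}{dv}U_t \leq \delta_t$, and closure backward in $t$ within $\bar T$ by plugging $\frac{d}{dv}U_{t+1}(v)=\delta_{t+1}=\delta_t$ into the recursion of \Cref{lem:derivative_utility}. Your extra remarks (the rearranged identity, the explicit induction, and handling the almost-everywhere issue via right derivatives/subdifferentials) are just a more careful write-up of the same argument.
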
 
\begin{proof}
Since $\frac{d}{dv} U_t(v) \leq \delta_t$ for all $v\in [0,1]$ and all $t=1,\dots,T$, and because of the monotonicity of $\frac{d}{dv} U_t$, we have that if $\frac{d}{dv} U_t(v) = \delta_t$ then $\frac{d}{dv} U_t(v') = \delta_t$ for all $v'\geq v$. We also have that
\[
\frac{d}{dv} U_t(v) = \delta_t\cdot r_t(v) + (1-r_t(v))\cdot \frac{d}{dv} U_{t+1}(v),
\]
so, if $t,t+1\in \bar{T}$ and $(t+1,v)\in I$, then $\frac{d}{dv}U_{t+1}(v)=\delta_{t+1}=\delta_t$. Replacing this in the previous equation, we obtain that $ \frac{d}{dv}U_t(v)=\delta_t$, and therefore, $(t,v)\in I$.    
\end{proof}
Intuitively, a point $(t,v)$ in $I$ corresponds to a buyer that gets the item for sure within $\bar{T}$. We formalize this intuition in the following claim.

\begin{claim}
\label{claim:corners_one}
    For every $(t,v)$ in $I$, there is $(t',v)\in I$ with $t'\geq t$ such that $r_{t'}(v)=1$.
\end{claim}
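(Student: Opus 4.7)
The plan is to walk forward in time from $(t,v) \in I$ until we hit a slot where $r$ equals one, using the recursion for $\frac{d}{dv}U_t$ to show that the walk stays inside $I$ and inside $\bar T$ until it terminates. The key algebraic step is to rearrange
\begin{align*}
\delta_t = \tfrac{d}{dv}U_t(v) = \delta_t\cdot r_t(v) + (1-r_t(v))\cdot \tfrac{d}{dv}U_{t+1}(v)
\end{align*}
(valid because $(t,v)\in I$) as
\begin{align*}
(1-r_t(v))\cdot\bigl(\delta_t - \tfrac{d}{dv}U_{t+1}(v)\bigr) = 0,
\end{align*}
so either $r_t(v)=1$ (in which case we are done with $t'=t$) or $\frac{d}{dv}U_{t+1}(v) = \delta_t$.

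In the second case, I would combine this with the bound $\frac{d}{dv}U_{t+1}(v) \le \delta_{t+1}\le \delta_t$ (established by induction in the proof of \Cref{lem:r_monotone}) to conclude that $\delta_{t+1}=\delta_t$, hence $t+1 \in \bar T$ by maximality of $\bar T$, and moreover $(t+1,v)\in I$. So I have strictly advanced the time index while staying in $I\cap (\bar T\times [0,1])$.

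I would then iterate this step. Because $\bar T$ is a finite interval, after finitely many iterations I reach the largest element $t^{\max}$ of $\bar T$ (with $(t^{\max},v)\in I$, should the walk reach that far). At $t^{\max}$ we have $\delta_{t^{\max}+1} < \delta_{t^{\max}}$ (if $t^{\max}<T$) or $t^{\max}=T$; either way the dichotomy forces $r_{t^{\max}}(v)=1$, since the alternative $\frac{d}{dv}U_{t^{\max}+1}(v)=\delta_{t^{\max}}$ is incompatible with $\frac{d}{dv}U_{t^{\max}+1}(v)\le \delta_{t^{\max}+1}<\delta_{t^{\max}}$ (using the convention $U_{T+1}\equiv 0$ when $t^{\max}=T$).

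The main subtlety I anticipate is just bookkeeping at the boundary of $\bar T$ and ensuring the argument is performed at a point $v$ where all the relevant derivatives $\frac{d}{dv}U_s(v)$ for $s\in\bar T$ are well-defined; since each $U_s$ is convex, each individual derivative exists almost everywhere, and the countable union of null sets is still null, so this is a nonissue for $\mu$-almost every $v$, which is the only regime in which $I$ is used in the rest of the argument. Nothing in the proof requires anything beyond the recursion for $\frac{d}{dv}U_t$ and the monotone bound $\frac{d}{dv}U_{t+1}\le \delta_{t+1}$, both of which are already in hand from \Cref{lem:derivative_utility} and the proof of \Cref{lem:r_monotone}.
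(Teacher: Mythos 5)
Your proof is correct and rests on exactly the same ingredients as the paper's: the recursion $\frac{d}{dv}U_t(v)=\delta_t\cdot r_t(v)+(1-r_t(v))\cdot\frac{d}{dv}U_{t+1}(v)$ rearranged at a point of $I$, combined with the bound $\frac{d}{dv}U_{t+1}(v)\le\delta_{t+1}\le\delta_t$ and the maximality of $\bar T$. The paper packages this as a contradiction argument at the largest time with $(t',v)\in I$ (invoking \Cref{claim:staircase}), whereas you run the equivalent dichotomy as a forward walk that terminates at the end of $\bar T$; the substance is the same.
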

\begin{proof}
    Notice that, by \Cref{claim:staircase}, it is enough to show that if $(t,v)\in I$ but $(t+1,v)\notin I$, then $r_t(v)=1$. So, consider a pair $(t,v)\in I$ such that $(t+1,v)\notin I$, and assume by contradiction that $r_t(v)<1$. We have that
    \begin{align*}
    \delta_t =\frac{d}{dv}U_{t}(v) = \delta_{t} \cdot r_{t} (u) + (1-r_{t}(v)) \cdot \frac{d}{dv}U_{t+1}(v)
    \leq \delta_{t} \cdot r_{t} (v) + (1-r_{t}(v)) \cdot \delta_{t+1} \leq \delta_t.
\end{align*}
    Since we have $\delta_t$ on both ends, all inequalities hold with equality. However, since $(t+1,v)\notin I$, either $\frac{d}{dv}U_{t+1}(v)<\delta_{t+1}$, or $\delta_{t+1}<\delta_t$. In both cases we get a contradiction.
\end{proof}

Now we can define the modified allocation function, that assigns an item within $I$ immediately. Formally, we define  
\[
\bar r_t(v) = \begin{cases} 1 \quad &\text{ if } (t,v) \in I  \\
r_t(v) &\text{ else.}
\end{cases}
\]

The function $\bar r$ is monotone by  \Cref{claim:staircase} and monotonicity of $r$ if $(t,v) \not\in I$. 

Let $\bar M$ be the mechanism corresponding to $\bar r$. The next Claim shows that the surplus of any particular buyer is equal under $M$ or $\bar M$.

\begin{claim}
\label{claim:r_and_rbar}
It holds that
   \begin{equation}
\label{eq:r_and_rbar_same}
    \sum_{k=t}^T \delta_k r_k(v) \prod_{t \leq i < k } (1 -r_i(v)) = \sum_{k=t}^T \delta_k \bar r_k(v) \prod_{t \leq i < k } (1 - \bar r_i(v)).
\end{equation} 
\end{claim}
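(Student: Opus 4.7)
The plan is to prove the claim by backward induction on $t$, using the recursive structure of both sides together with the identity (from iterating \Cref{eq:derivative_utility} in \Cref{lem:utility_formula}) that the left-hand side of \eqref{eq:r_and_rbar_same} equals $\frac{d}{dv}U_t(v)$.

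For notational convenience, let
\[
\phi_t(v) = \sum_{k=t}^T \delta_k\, r_k(v) \prod_{t\leq i<k}(1-r_i(v)), \qquad \bar\phi_t(v) = \sum_{k=t}^T \delta_k\, \bar r_k(v) \prod_{t\leq i<k}(1-\bar r_i(v)).
\]
Both satisfy the one-step recursions $\phi_t = \delta_t r_t + (1-r_t)\phi_{t+1}$ and $\bar\phi_t = \delta_t \bar r_t + (1-\bar r_t)\bar\phi_{t+1}$, with the convention $\phi_{T+1}=\bar\phi_{T+1}=0$. From the iterated form of \Cref{eq:derivative_utility} used to derive \Cref{lem:utility_formula}, we moreover have $\phi_t(v)=\tfrac{d}{dv}U_t(v)$ whenever the derivative exists (and we work at such $v$; the remaining measure-zero set is irrelevant for the subsequent revenue comparisons).

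The base case $t=T+1$ is trivial. For the inductive step, fix $t\leq T$ and a value $v$, and suppose $\phi_{t+1}(v)=\bar\phi_{t+1}(v)$. Split into two cases according to whether $(t,v)\in I$. If $(t,v)\notin I$, then by the definition of $\bar r$ we have $\bar r_t(v)=r_t(v)$, so the two recursions coincide termwise and the induction hypothesis gives $\phi_t(v)=\bar\phi_t(v)$ immediately. If $(t,v)\in I$, then $\bar r_t(v)=1$ by construction, which makes the recursion collapse to $\bar\phi_t(v)=\delta_t$. On the other hand, the defining property of $I$ is precisely that $\tfrac{d}{dv}U_t(v)=\delta_t$, so $\phi_t(v)=\tfrac{d}{dv}U_t(v)=\delta_t=\bar\phi_t(v)$. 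This closes the induction and proves \eqref{eq:r_and_rbar_same}.

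The attractive feature of this argument is that in the "hard" case $(t,v)\in I$, one does not need to unfold the recursion at all or track how replacing $r_t$ by $\bar r_t$ propagates through later periods; the identification of $\phi_t$ with $\tfrac{d}{dv}U_t$ bypasses that entirely. The only mild subtlety I anticipate is keeping straight that the recursion for $\bar\phi_t$ is built from $\bar r$, not from $r$, so in the first case one must invoke the induction hypothesis $\phi_{t+1}=\bar\phi_{t+1}$ rather than trying to argue directly in terms of $r$; and that the identity $\phi_t = \tfrac{d}{dv}U_t$ is only used at points where $\tfrac{d}{dv}U_t$ exists, which is almost everywhere and hence suffices when \eqref{eq:r_and_rbar_same} is later integrated to compare utilities, demands and revenues of $M$ and $\bar M$.
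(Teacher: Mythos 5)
Your proof is correct, but it takes a genuinely different route from the paper's. The paper argues purely algebraically: it locates the block of periods $t_{min}\leq t'\leq t_{max}$ in $\bar T$ with $(t',v)\in I$, invokes \Cref{claim:corners_one} to get $r_{t_{max}}(v)=1$, uses that $\delta$ is constant on the block, and then telescopes the partial sums of the left-hand side so that both sides collapse to $\sum_{k=t}^{t_{min}-1}\delta_k r_k(v)\prod_{t\leq i<k}(1-r_i(v))+\delta_{t_{min}}\prod_{t\leq i<t_{min}}(1-r_i(v))$. You instead run a one-step backward induction on $t$ and, in the case $(t,v)\in I$, shortcut everything through the identification $\phi_t(v)=\frac{d}{dv}U_t(v)$ (from iterating \eqref{eq:derivative_utility}) together with the defining property of $I$, so that both sides equal $\delta_t$; this bypasses \Cref{claim:corners_one}, the explicit block decomposition, and the telescoping entirely, and is arguably the cleaner argument. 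The price is that your identity $\phi_t=\frac{d}{dv}U_t$ requires all the relevant derivatives to exist at $v$ (not just $\frac{d}{dv}U_t$), so your equality is established only almost everywhere; as you note, this is harmless because \Cref{claim:r_and_rbar} is only ever used inside integrals against $f_t$ and $d\mu$, and in fact the paper's own chain (the definition of $I$ and the use of \eqref{eq:derivative_utility} inside \Cref{claim:staircase,claim:corners_one}) carries the same a.e.\ qualification, so your level of rigor matches the paper's. One small point to keep straight if you write this up: make explicit that the exceptional null set may depend on the period, and take the finite union over $t\in[T]$ before fixing $v$, so that the induction runs at a single $v$ where \eqref{eq:derivative_utility} holds for every period simultaneously.
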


\begin{proof}
    Fix $(t,v)$. If $(t,v)$ is such that there is no $(t',v') \in I$ with $t' \geq t$, then $r$ and $\bar r$ coincide and the equation holds. 

    Otherwise let $t_{max}$ be the maximal $t\in T$ such that $(t_{max}, v) \in I$ and $t_{max}\geq t$. Moreover, let $t_{min}$ be the maximum of the minimal such value and $t$.

    Since $\bar r_{t'}(v')=1$ for all $(t',v')\in I$,
    \[
    \sum_{k=t}^T \delta_k \bar r_k(v) \prod_{t \leq i < k } (1 - \bar r_i(v)) = \sum_{k=t}^{t_{min}-1} \delta_k r_k(v) \prod_{t \leq i < k } (1 -  r_i(v)) + \delta_{t_{min}}\prod_{t \leq i < t_{min} } (1 -  r_i(v)).
    \]

    For $r$ we first use that $\delta_{t'}$ is the same for all times $t'$ with $t_{min}\leq t' \leq t_{max}$. Then, we use that $r_{t_{max}}(v,t)=1$ by \Cref{claim:corners_one}. This allows to exploit a telescope behavior, in the sense that the second sum boils down two $r_{t_{max}-1}(v,t) x + (1- r_{t_{max}-1}(v,t))x$ and then we can push this argument further until the whole sum reduced to a single term. 
    
    \begin{align*}
    \sum_{k=t}^T \delta_k r_k(v) \prod_{t \leq i < k } (1 - r_i(v)) &= \sum_{k=t}^{t_{min}-1} \delta_k r_k(v) \prod_{t \leq i < k } (1 -  r_i(v)) + \sum_{k=t_{min}}^{t_{max}} \delta_k r_k(v) \prod_{t \leq i < k } (1 -  r_i(v))\\
    &= \sum_{k=t}^{t_{min}-1} \delta_k r_k(v) \prod_{t \leq i < k } (1 -  r_i(v)) + \delta_{t_{min}} \sum_{k=t_{min}}^{t_{max}} r_k(v) \prod_{t \leq i < k } (1 -  r_i(v))\\
    &= \sum_{k=t}^{t_{min}-1} \delta_k r_k(v) \prod_{t \leq i < k } (1 -  r_i(v)) + \delta_{t_{min}} \prod_{t \leq i < t_{min} } (1 -  r_i(v))
    \end{align*}
\end{proof}

\begin{claim}
    The revenue of mechanism $\bar M$, defined by $\bar r$, equals that of mechanism $M$.
\end{claim}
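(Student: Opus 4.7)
The plan is to re-express revenue in a form that depends on $r$ only through the discounted surplus sum
\[
S_t(v):=\sum_{k=t}^T \delta_k\, r_k(v)\prod_{t\leq i<k}\bigl(1-r_i(v)\bigr),
\]
and then to invoke the previous claim, which says precisely that $S_t(v)$ is unchanged by the replacement $r\mapsto \bar r$. Combined with \Cref{lem:utility_formula}, which identifies $U_t(v)$ with the antiderivative of $S_t$, this immediately gives $U_t=\bar U_t$, so both ingredients entering the revenue expression coincide between $M$ and $\bar M$.

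First, I would rewrite the revenue using the IC identity $U_t(v)=\delta_t\, v\, r_t(v) - p_t(v) + (1-r_t(v))\,U_{t+1}(v)$. Iterating this from period $T$ down to $t$, with the boundary condition $U_{T+1}\equiv 0$, yields by a short induction
\[
U_t(v)=\sum_{k=t}^T \bigl[\delta_k\, v\, r_k(v) - p_k(v)\bigr]\prod_{t\leq i<k}\bigl(1-r_i(v)\bigr),
\]
so that the expected total payment collected from a buyer arriving at time $t$ with value $v$ equals $v\cdot S_t(v) - U_t(v)$. Weighting by the arrival density $f_t$ and summing over $t$ (equivalently, swapping the order of summation in the $f^*_t\cdot p_t$ formulation \eqref{eq:revenue_general} using the expansion of $f^*_t$ from \eqref{eq:fstar}) gives the alternative expression
\[
\rev(M)=\sum_{t=1}^T\int_0^1\bigl[v\cdot S_t(v)-U_t(v)\bigr]\,f_t(v)\,d\mu(v).
\]
The preceding claim gives $\bar S_t=S_t$, and \Cref{lem:utility_formula} applied to $\bar r$ then yields $\bar U_t=U_t$. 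Substituting both equalities into the above formula yields $\rev(\bar M)=\rev(M)$.

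The only delicacy is to make sure the rewritten revenue formula is legitimately applied to $\bar M$ itself, which requires $\bar M$ to be incentive-compatible. This is ensured by the converse direction of \Cref{lem:r_monotone}: since $\bar r$ is non-decreasing (by \Cref{claim:staircase}), it defines an incentive-compatible mechanism with explicit payments, so the derivation of the rewritten formula applies verbatim to $\bar M$.
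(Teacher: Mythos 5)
Your proof is correct and follows essentially the same route as the paper: it decomposes revenue as the welfare-type term $\int v\,S_t(v)\,f_t(v)\,d\mu(v)$ minus the utility term $\int U_t(v)\,f_t(v)\,d\mu(v)$ and shows both are invariant under $r\mapsto\bar r$ via \Cref{claim:r_and_rbar} and \Cref{lem:utility_formula}, with the incentive compatibility of $\bar M$ supplied by the converse part of \Cref{lem:r_monotone}. Your explicit iteration of the IC identity to get the per-buyer payment $v\,S_t(v)-U_t(v)$ is just a slightly more explicit rendering of the paper's "revenue = social welfare minus total utility" argument.
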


\begin{proof}
To show that the revenue remains the same we aim to express the revenue as the social welfare minus the sum of the utilities. We show that both these parts, the social welfare and the total utilities are the same in $M$ and $\bar M$.

\paragraph{Social welfare.}
We obtain the following description to compute the social welfare for mechanism $M$ with allocation functions $r$. 
\begin{align*}
    \sum_{t=1}^T \delta_t \int_0^1 f^*_t(v) r_t(v) \, d\mu(v)
     = & \sum_{t=1}^T \int_0^1 \delta_t  r_t(v) \cdot \left( \sum_{j=1}^t f_j(v) \prod_{j \leq i < t } (1 -r_i(v)) \right)  \, d\mu(v) \\
      = & \sum_{t =1}^T \int_0^1  f_t(v) \left( \sum_{k=t}^T \delta_k r_k(v) \prod_{t \leq i < k } (1 -r_i(v))
        \right)
  \, d\mu(v).
\end{align*}

 Now, by \Cref{claim:r_and_rbar}, the coefficient of $f_t$ is the same if we replace $r$ with $\bar r$, and therefore, the social welfare is the same in $M$ and $\bar M$.

\paragraph{Utilities.}
We also show that the utilities are the same for $M$ and $\bar M$.

Consider
\begin{align*}
 \sum_{t=1}^T \int_0^1 U_t(v) f_t(v) \, d\mu(v)
  = & \sum_{t=1}^T \int_0^1 \int_0^v \sum_{j=t}^T \delta_j r_j(u) \prod_{t \leq i < j } (1-r_i(u)) du  f_t(v) \, d\mu(v) \\
  = & \sum_{t=1}^T \int_0^1 f_t(v) \left(\int_0^v \sum_{j=t}^T \delta_j r_j(u) \prod_{t \leq i <j } (1-r_i(u)) du 
    \right) 
  \, d\mu(v).
\end{align*}
 Again, by using \Cref{claim:r_and_rbar}, we observe that for each value $u$ the terms in the integral coincide for $r$ and $\bar r$. Thus, also the integrals and, thus, the social welfare coincide.

Finally, we obtain that also the revenue in the mechanisms $M$ and $\bar M$ coincide. This follows directly from the fact that the revenue can be expressed as the social welfare minus the total utility. Since both the social welfare and the total utility coincide for $M$ and $\bar M$, the same follows for the revenue.
\end{proof}

We just showed that we can modify $r$ on a subset of time steps where all $\delta$-values coincide to obtain a monotone function. But by repeating this argument for all relevant subsets of time steps we end up with the final monotone function $\bar r$.

\section{General Discounting}
\label{sec:discount}

A natural generalization of our model is to add  
discount factors on future monetary transfers both for the seller as well as for the buyers. We discuss that our main results and proofs transfer to this setting, which demonstrates the generality of our approach. 

Let $\lambda^S_t \in (0,1]$ for all $t \in [T]$ be an arbitrarily decreasing sequence of discount factors on the seller's profit, i.e., the seller scales the payments she receives at time $t$ by $\lambda^S_t$ and we have $\lambda^S_{1}\geq \ldots \geq \lambda_{T}^S $ .
To be precise, the revenue of the seller in this setting is
\begin{equation}
\label{eq:revenue_discounted}
 \rev(M)=\sum_{t=1}^T \lambda^S_t \int_0^1 p_t(v) \cdot f^*_t(v)\, d\mu (v).    \tag{rev disc}
\end{equation}
This coincides with \eqref{eq:revenue_general} for $\lambda^S_t=1$ for all $t \in [T]$.

Let $\lambda^B_t \in (0,1]$ for all $t \in [T]$ be an arbitrary decreasing sequence of discount factors on the buyers' money. That is, a buyer of type $v$ that gets the item in period $t$ and pays $p$, gets a utility of $\delta_t\cdot v- \lambda^B_t\cdot p$.

With these discount factors, some care is needed when we define the class of feasible mechanisms. To see this, consider the following example. Assume the seller and the buyer have a different discount sequences, e.g., let $\alpha > \beta$ and define 
$\lambda^B_t \coloneqq \alpha^t$, and  $ \lambda^S_t \coloneqq \beta^t$ for all $t \in [T]$.
Then, the seller and the buyer can both achieve a positive (and even unbounded revenue) without even exchanging a good, as follows. 
At time $0$, the buyer makes a payment of $P$ to the seller, and at time 1, the seller pays $P/(2 \alpha) + P/(2\beta)$ to the buyer. After this exchange, the revenue of the seller is 
$
P- \left(P/ (2\alpha) + P/(2\beta)\right)\cdot \beta = (P/2)\cdot  (1- \beta/\alpha )>0.
$
And the revenue of the buyer is 
$
-P + (P/ (2\alpha) + P/(2\beta)) \cdot \alpha = (P/2)\cdot ( \alpha/\beta -1) >0.$
Since both the buyer and the seller profit from this exchange and their profit is unbounded if $P$ goes to infinity they could achieve unlimited utility.

We restrict our attention to mechanisms that rule out such behavior. To do so, we make the following two assumptions:
\begin{enumerate}[(i)]
    \item prices are nonnegative, \label{it:non_neg}
    \item every payment from the buyer to the seller is made exactly at the point in time when the buyer gets an item. \label{it:no_time_delay}
\end{enumerate}
On an intuitive level, assumption \eqref{it:non_neg} means there are no payments from the seller to the buyers, so borrowing money, as in the example, is forbidden. Mathematically, we need this assumption to obtain the formula in \Cref{lem:utility_formula}. In fact, to prove the formula, we need that the utility equals the integral of its derivative, for which we need that a buyer with valuation 0 has utility 0. It is easy to check that if payments are nonnegative, $U_T(0)=0$, and by induction, we have the same for every period. Note that in the example this is not satisfied. Even buyers of type $0$ get positive utility with the borrowing scheme.

Assumption \eqref{it:no_time_delay} means buyers are not allowed to pay via a credit (in which case they would get the good today and pay the money tomorrow), and buyers cannot pay in advance. This assumption is needed to rule out that profit is achieved by timing the payment well.  
Payments in advance are already ruled out by the anonymity of the mechanism. The mechanism is supposed to forget everything that happened in preceding periods and decide based on the current period and the bids. Otherwise, the mechanism could not be expressed by the functions $r_t$ and $p_t$.
Mathematically, we need that buyers do not pay later to be consistent with the formula $$U_t(v) = \max_{\bar{v} \in [0,1]} \left\{
\delta_t\cdot v \cdot r_t(\bar{v} ) - \lambda^B_t\cdot p_t(\bar{v}) + (1 -r_t(\bar{v})) \cdot U_{t+1}(v)
\right\}.$$ Note that, in this formulation we assume that the utility of period $t+1$ is only relevant with probability $(1-r_t(v))$, i.e., only when the buyer does not receive the good.
Thus, assumptions \eqref{it:non_neg} and \eqref{it:no_time_delay} are necessary and well-motivated.

To see that the proofs of \Cref{thm:rationing_optimal,thm:pricing_optimal} still work under this extension, first observe that under conditions (i) and (ii), the proof of \Cref{lem:kplus1steps} is not affected at all. 

Conceptually, it is easy to see that the proof of \Cref{lem:derivative_utility} transfers, as it relies on applying the envelope theorem to the recursive formula for the utility. Notice that the payments do not appear in the derivative, so the discount factor does not affect the proof. Then to prove \Cref{lem:utility_formula} the formula developed in \Cref{lem:derivative_utility} is iterated, thus the proof remains the same.

The proof of \Cref{lem:r_monotone} consists of two parts. First, we show that the functions $r$ corresponding to a given mechanism are either already monotone in $v$ or can be redefined in an appropriate way to fulfill this. This first part is based on applying the formula of the utility developed in \Cref{lem:utility_formula} and is thus not affected. 
In the second part of the proof, given some collection of monotone functions $r$ we define price functions. Here, apparently the discounting factor leads to a scaling of the price vector. 
Namely, 
\begin{align*}
    p_t(v) = \frac{1}{\lambda_t^B} \left(\delta_t\cdot v \cdot r_t(v) +(1-r_t(v))\cdot U_{t+1}(v) - \int_0^v \delta_t \cdot r_t(u) + (1-r_t(u))\cdot \frac{d}{du}U_{t+1}(u) \, du.\right)
\end{align*}
This factor cancels out directly, when analyzing the utility such that the remaining proof transfers without any changes.

The proofs of \Cref{thm:rationing_optimal,thm:pricing_optimal} transfer directly, since the formula of the revenue remains linear in $r_t$ for $t \in [T]$ if we add the discount factor. More precisely, the formula of $p_t(v)$ gets scaled by a factor of $\frac{1}{\lambda_t^B}$, while the formula of $f^*$ does not change at all. Thus, as without the new discount factors, $p_t$ depends linearly on $r_j$ with $j \geq t$ and $f^*_t$ depends linearly on $r_j$ for $j <t$. Thus the revenue
\begin{align*}
    \rev(M) &= \sum_{t=1}^T \lambda^S_t \int_0^1 p_t(v)\cdot f^*_t(v) \, d\mu(v).
\end{align*}
remains linear in each $r_j$ (one by one) despite the discount factor.
The only constraint is the inventory constraint, which just depends on $f_t^*$ and thus remains linear as well.

This discussion shows that our main results easily extend to a more general model where the buyer and the seller both have a discount rate and these rates are not correlated.

\section{Further Research Directions}

In this paper, we presented simple, revenue optimal mechanisms for large multi-period markets.
In the revenue optimal mechanism, the seller chooses a price and a capacity per period. If we further assume supply is unbounded, capacities are no longer necessary, meaning the optimal mechanism is just a posted price per period. 

Motivated by the wish for simple mechanisms, there are two natural follow-up questions. Firstly, what is the comparative revenue achieved by posted prices versus any anonymous mechanism when supply is limited? Put differently, what is the trade-off for simplicity when solely considering posted pricing mechanisms? 
Secondly, how many lotteries has a seller to run in an optimal mechanism? In other words, in how many periods the seller must set the capacity strictly between $0$ and $\infty$. In \Cref{ex:rationing_usefull_inventory} we saw that at least one lottery is necessary. Does this suffice? An important reason to believe it should not be necessary to use rationing in every period is that there is a single supply constraint. If the revenue were linear in all allocation functions, LP sparsity would imply that we need rationing only once.

These two questions are related since a positive answer to the latter (one lottery is sufficient), is likely to lead to a small quotient between the optimal revenue of any anonymous mechanism and the optimal revenue of posted prices. 

Furthermore, the same question can also be asked in an atomic setting where buyers are not infinitesimally small. Here, even in the case of unlimited inventory, the use of lotteries helps to increase the objective function value, as seen in \Cref{ex:rationing_usefull_discrete}. The example shows that the seller can achieve a revenue of $\tfrac{5}{4}$ when using lotteries and of at most $1$ when using only prices.
In the relatively simple setting with two buyers, it is possible to explicitly find the instance in which the ratio between the seller's revenue using only prices and that when she can use prices and lotteries is maximum. It turns out that the largest possible ratio is indeed $\tfrac{5}{4}$, and it is attained by the example above. 

When moving to a setting with more buyers and (relevant) periods, the analysis becomes more involved. Consider a three-period setup with three buyers arriving at time one and with valuations $1$, $1/2$, and $1/3$. First, observe that a seller choosing only prices can achieve a revenue of $1$. However, if the seller can set capacities, the revenue can be increased to $31/24$. This revenue can be achieved with the strategy $(p_1,c_1)=(\frac{13}{24}, \infty)$,   $(p_2,c_2)=(\frac{5}{12}, 1)$,  $(p_3,c_3)=(\frac{1}{3}, 1)$, where $p_i$ is the price in period $i$ and $c_i$ is the amount sold in period $i$ for $i\in [3]$.
Although now the strategy space is larger, it can still be checked that it is optimal for buyer $i$ to buy at time $i$ for $i\in [3]$. The resulting revenue is     
$
p_1 +  p_2 + p_3 = \frac{13}{24} + \frac{5}{12} + \frac{1}{3} = \frac{31}{24}.
$
Proving that this is indeed the best possible strategy for the seller turns out to be quite tedious. Moreover, giving a general bound on the fraction of the optimal revenue a pricing mechanism can achieve is an exciting question for future research. This would require a precise formulation of the discrete model and its equilibria, which is less immediate than in the non-atomic case, where there is a clear, natural choice.

A different direction is to explore the existence of an algorithm to compute the optimal strategy for the seller, encompassing both optimal capacities and prices. In the setting with unbounded supply, the algorithm of \citet{B08} can compute the optimal prices. Can this result be generalized to compute the optimal prices and capacities?

\vspace{1cm}

\noindent
\textbf{Acknowledgments.}\quad  We would like to
thank Juan Escobar, Benny Moldovanu, and Andrzej Skrzypacz for fruitful discussions and feedback, that greatly improved the presentation of the paper.
We also thank Simon Board, Simon Loertscher, Balasubramanian Sivan, and Philipp Strack for their helpful comments.

\vspace{1cm}

 \bibliographystyle{ACM-Reference-Format}

 \bibliography{literature}

\end{document}